\documentclass[11pt,a4paper]{article}

\usepackage{indentfirst,mathrsfs}
\usepackage{amsfonts,amsmath,amssymb,amsthm}
\usepackage{latexsym,amscd}
\usepackage{amsbsy}
\usepackage{color}
\usepackage[noadjust]{cite}
\usepackage{cite}
\usepackage{multirow}
\usepackage{makecell}
\usepackage{float}
\usepackage{cases}
\newtheorem{thm}{Theorem}
\newtheorem{lem}{Lemma}
\newtheorem{cor}{Corollary}

\newtheorem{defn}{Definition}
\newtheorem{example}{Example}
\newtheorem{remark}{Remark}

\newcommand{\F}{{\mathbb F}}

\newcommand{\fnn}{{\mathbb F}_{2^{n}}}
\newcommand{\fn}{{\mathbb F}_{2^{2m}}}

\newcommand{\bm}{\overline{b}}

\begin{document}

\title{Complete characterization of the differential spectrum of a Niho type power function}
\author{Nian Li, Xi Xie, Rui Xu, Yi Yu, Xiangyong Zeng
\thanks{N. Li, X. Xie and Y. Yu are with the Hubei Provincial Engineering Research Center of Intelligent Connected Vehicle Network Security, and School of Cyber Science and Technology, Hubei University, Wuhan 430062, China.
R. Xu and X. Zeng are with the Hubei Key Laboratory of Applied Mathematics, and Faculty of Mathematics and Statistics, Hubei University, Wuhan 430062, China.
Email: nian.li@hubu.edu.cn, xi.xie@aliyun.com, 2937519468@qq.com, 275768902@qq.com, xiangyongzeng@aliyun.com.}
}
\date{}%\today
\maketitle
\begin{quote}
{{\bf Abstract:}
Power functions with Niho exponents have attracted considerable attention due to their important applications in sequence design, coding theory, and cryptography. This paper investigates the differential properties of Niho type power functions of the form $F(x)=x^{s(2^m-1)+1}$ over $\mathbb{F}_{2^{2m}}$ with $2\leq s\leq 2^m$.
We first establish a general characterization of the differential spectrum of $F(x)$ having at most three nonzero values via its Walsh spectrum.
Focusing subsequently on the case $s=(2^k+1)^{-1} \pmod{2^m+1}$ where $\gcd(k,m)=e$, we employ a refined analysis of the number of solutions to certain equations over finite fields. Specifically, it is proved that $F(x)$ is locally differentially $2^e$-uniform when $\gcd(2^k-1,2^m+1)=2^e+1$ and locally differentially $(2^{2e}-2^e)$-uniform when $\gcd(2^k-1,2^m+1)=1$, and their differential spectra are completely determined. These results completely characterize the differential properties of this family and yield new infinite families of locally differentially $4$-uniform power functions.
}

{ {\bf Keywords:}} Niho exponent, Power function, Differential spectrum, Locally differential uniformity.
\end{quote}

\section{Introduction}
Substitution boxes (S-boxes), typically represented by vectorial Boolean functions over finite fields, are the primary source of nonlinearity in modern block ciphers. Consequently, the cryptographic properties of vectorial Boolean functions have been extensively studied. Differential cryptanalysis, introduced by Biham and Shamir \cite{BS}, is one of the most powerful techniques for evaluating the security of symmetric cryptosystems. For a mapping $F:\F_{2^n}\rightarrow\F_{2^n}$, its differential behavior is characterized by the difference distribution table (DDT), whose entries are defined by
\[{\rm DDT}_F(a,b)=\#\{x\in\F_{2^n}:F(x+a)+F(x)=b\},\]
where $a,b\in\F_{2^n}$. The differential uniformity of $F$, introduced by Nyberg \cite{Nyberg-93,SM-NK}, is defined as
\[\Delta_F=\max_{a\in\F_{2^n}^*,\,b\in\F_{2^n}}{\rm DDT}_F(a,b).\]
In addition to differential uniformity, the differential spectrum of $F$, defined as the multi-set 
$\{{\rm DDT}_{F}(a,\,b): a\in\fnn^*,\,b\in\fnn\},$
provides a more detailed description of the differential behavior of $F$ and plays an important role in the analysis of various variants of differential cryptanalysis \cite{BCC}. Therefore, the differential properties of vectorial Boolean functions have attracted considerable research attention, and for comprehensive surveys and recent advances, the reader is referred to \cite{Berger,BCC,BCC1,BDMW,Budaghyan,EM,XY} and Chapter 11 in the book \cite{Carlet-book-2021}.

Power functions constitute an important class of S-box candidates owing to their simple algebraic structure and efficient implementation. Let $F(x)=x^d$ be a power function over $\F_{2^n}$. It is well known that ${\rm DDT}_F(a,\,b)={\rm DDT}_F\left(1,\,b/a^d\right)$ for every $a\in\F_{2^n}^*$ and $b\in\F_{2^n}$. Hence, the differential spectrum of $F$ is completely determined by the values of ${\rm DDT}_F(1,b)$, making power functions particularly amenable to differential analysis. The values ${\rm DDT}_F(1,0)$ and ${\rm DDT}_F(1,1)$ are often exceptional and may obscure the intrinsic differential behavior of $F$. Motivated by this observation, Blondeau, Canteaut and Charpin \cite{BCC1} introduced the notion of locally-APN power functions, which was later generalized by Blondeau and Perrin \cite{BP} to locally differentially $\delta$-uniform power functions. A power function $F$ over $\F_{2^n}$ is said to be locally differentially $\delta$-uniform if $\max\{{\rm DDT}_F(1,b):b\in\F_{2^n}\setminus\F_2\}=\delta$. In particular, when $\delta=2$, such functions are
referred to as locally almost perfect nonlinear (locally-APN) power functions, which offer superior resistance against differential cryptanalysis compared to other functions of equivalent differential uniformity. Until recently, only a few power functions over $\F_{2^n}$ with low locally differential uniformity and known differential spectra have been studied, as summarized in Table \ref{differential-table}. This motivates further investigations into such functions and their differential spectra.

\begin{table}[!htb]\footnotesize
\caption{The power function $F(x)=x^d$ over $\fnn$ with known differential spectrum}  \label{differential-table}
\renewcommand\arraystretch{0.8}
\setlength\tabcolsep{2pt}
\centering
\begin{tabular}{lllll}
\hline Exponent $d$                & Condition                   &  $\Delta(F)$ &  $\delta$   & Refs.                  \\ \hline
    $2^t+1$            &    $\gcd(t,n)=s$            &    $2^s$   & $2^s$    &\cite{BCC,EM}            \\ 
    $2^{2t}-2^t+1$     & $\gcd(t,n)=s$, $n/s$ odd    &    $2^s$  &$2^s$     &\cite{BCC}            \\
    $2^n-2$            &$n\geq 2$                    & $2$ or $4$  & $2$  &\cite{BCC,EM}            \\ 
   $2^{2k}+2^k+1$      &$n=4k$                       & 4    & $2$ or $4$          &\cite{BCC,XY}       \\ 
   $2^t-1$             &$t=3,n-2$                    & 6 or 8    & $6$      &\cite{BCC1}             \\ 
  $2^t-1$             &$n=2m$, $t=m, m+1$    & $2^{m}-2$ or $2^{m}$&  $2$   &\cite{BCC1}   \\ 
   $2^t-1$             &$t=(n-1)/2,(n+3)/2$, $n$ odd & $6$ or $8$   &$6$  &\cite{BP}             \\ 
   $2^{3k}+2^{2k}+2^{k}-1$ & $n=4k$                  & $2^{2k}$  &$2^{2k}-2^k$     &\cite{LWZT,Kim2022,Tu2023}             \\ 
   $2^m+2^{(m+1)/2}+1$ &$n=2m$, $m\geq5$ odd         & $8$        &$8$    &\cite{XYY}             \\ 
   $2^{m+1}+3$         &$n=2m$, $m\geq5$ odd         & $8$           &$8$  &\cite{XYY}             \\
   $k(2^m-1)$         & $n=2m$, $\gcd(k,2^m+1)=1$        &$2^m-2$ & $2$             &   \cite{HLXZT} \\
   $(2^m\!-\!1)(2^k\!+\!1)^{-1}\!+\!1$        & $n=2m$, $\gcd(k,m)=1$  &$2^m$ & $2$   & \cite{Xie-locally} \\
   $2^m+3$         & $n=2m$      &$2^m$ or $2^m+2$ & $4$ or $6$            &   \cite{Li-Yan23} \\
   $3\cdot 2^m-2$         & $n=2m$, $m\geq 4$ even       &$2^m$ & $4 $            &   \cite{YanLi2026} \\
   $k(2^m-1)$         & $n=2m$, $\gcd(k,2^m+1)=t\geq 1$        &$t(2^m\!-\!1)\!-\!1$ & $2t^2$            &   \cite{CL2025} \\
   $(2^m\!-\!1)(2^k\!+\!1)^{-1}\!+\!1$        & $n=2m$, $\gcd(k,m)\!=\!e$, $\gcd(2^k\!\!-\!\!1,2^m\!\!+\!\!1)\!=\!2^e\!+\!1$  &$2^m$ & $2^e$ & This paper \\
   $(2^m\!-\!1)(2^k\!+\!1)^{-1}\!+\!1$        & $n=2m$, $\gcd(k,m)\!=\!e$, $\gcd(2^k\!-\!1,2^m\!+\!1)=1$  &$2^m$ & $2^{2e}-2^e$ & This paper \\\hline
\end{tabular}\\
-Here $\Delta(F)$ and $\delta$ denote the differential uniformity and locally differential uniformity of $F$ respectively. 
\end{table}

Niho type power functions are of particular interest in view of their rich connections with sequence design, coding theory, and cryptography. 
In this paper, we study a class of Niho type power functions over $\F_{2^{2m}}$ of the form
\[F(x)=x^{s(2^m-1)+1},\quad 2\le s\le 2^m.\]
It has been shown that every Niho type power function satisfies
${\rm DDT}_F(1,1)\geq 2^m$, indicating that the exceptional differential behavior is concentrated at the point $b=1$ \cite{Ranto-Rosendahl06}. We begin by characterizing the differential spectrum of $F$ under the assumption that the values ${\rm DDT}_F(1,b)$, for $b\neq1$, take at most three distinct values, by means of the Walsh transform of $F$. We then turn our attention to the special case $s=\left(2^k+1\right)^{-1}$, where $\left(2^k+1\right)^{-1}$ denotes the multiplicative inverse of $2^k+1$ modulo $2^m+1$. 
It was proved in \cite{Xie-locally} that this family is locally-APN when $\gcd(k,m)=1$, and conjectured that it may comprise all Niho type locally-APN power functions over $\F_{2^n}$. Motivated by this conjecture, we further investigate the case  $\gcd(k,m)>1$. Let $\gcd(k,m)=e$. For $e>1$, the associated differential equations exhibit a more intricate solution structure, requiring a refined analysis of their solution sets.
By analyzing the relation between the roots of two equations of algebraic degree two, we determine the exact numbers of solutions to the differential equations according to 
the value of $\gcd\left(2^k-1,2^m+1\right)$.
Consequently, we prove that 
$F$ is locally differentially $2^e$-uniform when $\gcd\left(2^k-1,2^m+1\right)=2^e+1$, and locally differentially $\left(2^{2e}-2^e\right)$-uniform when $\gcd\left(2^k-1,2^m+1\right)=1$.
Furthermore, by employing the Walsh spectrum, we completely characterize the differential spectrum of $F$. Extensive computational evidence suggests that our results include all locally differentially $4$-uniform Niho type power functions beyond the previously known families.

The remainder of this paper is organized as follows. Section \ref{prel} presents the necessary preliminaries. Section \ref{diff-four} establishes general results on the differential spectrum of Niho type power functions. Section \ref{diff1} focuses on the family $F(x)=x^{s(2^m-1)+1}$ with $s=\left(2^k+1\right)^{-1}$. Finally, Section \ref{conc} concludes the paper.

\section{Preliminaries}\label{prel}
Let \(q=2^m\). Throughout this paper, we work over the finite field
\(\F_{q^2}\), and denote by \(\overline{x}=x^q\) the conjugate of \(x\) over
\(\F_q\). For a finite set \(E\), let \(\#E\) denote its cardinality. We also define the unit circle
\[\mu_{q+1}=\left\{x\in\F_{q^2}: x\overline{x}=1\right\}.\]
The absolute trace from \(\F_{2^n}\) to \(\F_2\) is given by
    $\operatorname{Tr}_1^n(z)=\sum_{i=0}^{n-1}z^{2^i}.$
For a function $F(x)$ mapping from $\F_{2^n}$ to $\F_{2^n}$, define the exponential sum
\[W_F(a,b)=\sum_{x\in\F_{2^n}}(-1)^{{\rm Tr}_1^{n}\left(aF(x)+bx\right)}\]
for $a,\,b\in\F_{2^n}$. In particular, when $a=1$, the sum
$W_F(b)=\sum_{x\in\F_{2^n}}(-1)^{{\rm Tr}_1^n(F(x)+bx)}$ is called the Walsh transform of $F(x)$ at the point $b\in\F_{2^n}$. 

We first recall several auxiliary results that will be used throughout
the paper.

\begin{lem}\label{lem:gcd}
    Let $m,n$ be positive integers, and let $e=\gcd(m,n)$. Then
    \[
    \gcd\left(2^m-1,\;2^n-1\right)=2^{e}-1.
    \]
    Moreover,
        \[
    \gcd\left(2^m+1,\;2^n-1\right)=
    \begin{cases}
        1, &\text{if } n/e \text{ is odd},\\[6pt]
        2^{e}+1, &\text{if } n/e \text{ is even},
    \end{cases}
    \]
    and
        \[
    \gcd\left(2^m+1,\;2^n+1\right)=
    \begin{cases}
        2^{e}+1, &\text{if both } m/e \text{ and } n/e \text{ are odd},\\[6pt]
        1, &\text{if one of } m/e,\,n/e \text{ is even}.\\[6pt]
    \end{cases}
    \]
\end{lem}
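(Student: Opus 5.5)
The plan is to reduce everything to the classical identity $\gcd(2^a-1,2^b-1)=2^{\gcd(a,b)}-1$, together with two elementary observations. First, for odd $N$ the numbers $2^N-1$ and $2^N+1$ are coprime, since they are odd and differ by $2$. Second, $2^e-1$ divides $2^m-1$ whenever $e\mid m$, and $2^e+1$ divides $2^m+1$ whenever $m/e$ is odd.

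\emph{First claim.} I would use the Euclidean algorithm on exponents. For $a>b$ we have $2^a-1=2^{a-b}(2^b-1)+(2^{a-b}-1)$, so $\gcd(2^a-1,2^b-1)=\gcd(2^{a-b}-1,2^b-1)$. Iterating this mirrors the Euclidean algorithm on $(a,b)$ and terminates at $2^{\gcd(a,b)}-1$.

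\emph{Second claim.} Let $d=\gcd(2^m+1,2^n-1)$. Since $2^m+1\mid 2^{2m}-1$, the first claim gives $d\mid 2^{\gcd(2m,n)}-1$. Because $\gcd(m/e,n/e)=1$, we have $\gcd(2m,n)=2e$ if $n/e$ is even (then $m/e$ is odd), and $\gcd(2m,n)=e$ otherwise.
\begin{itemize}
\item If $n/e$ is odd, then $d\mid 2^e-1\mid 2^m-1$. Hence $d\mid\gcd(2^m-1,2^m+1)=1$.
\item If $n/e$ is even, then $2^{2e}-1\mid 2^n-1$, so $d=\gcd(2^m+1,2^{2e}-1)$. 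Write $2^{2e}-1=(2^e-1)(2^e+1)$. The factor $2^e-1$ divides $2^m-1$, so it is coprime to $2^m+1$. The factor $2^e+1$ divides $2^m+1$ because $m/e$ is odd. Therefore $d=2^e+1$.
\end{itemize}

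\emph{Third claim.} Let $d=\gcd(2^m+1,2^n+1)$. Then $d$ divides $\gcd(2^{2m}-1,2^{2n}-1)=2^{2e}-1=(2^e-1)(2^e+1)$. As before, $d$ is coprime to $2^e-1$, since $2^e-1$ divides $2^m-1$ and $\gcd(2^m-1,2^m+1)=1$. Hence $d=\gcd(2^m+1,2^n+1,2^e+1)$.
\begin{itemize}
\item If $m/e$ and $n/e$ are both odd, then $2^e+1$ divides both $2^m+1$ and $2^n+1$, so $d=2^e+1$.
\item If, say, $m/e$ is even, then $2e\mid m$, so $2^e+1\mid 2^{2e}-1\mid 2^m-1$. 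Thus $2^e+1$ is coprime to $2^m+1$, and $d=1$. The case where $n/e$ is even is symmetric.
\end{itemize}

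The only real care needed, and the step I expect to be the main obstacle, is the parity bookkeeping. One must correctly compute $\gcd(2m,n)$ from the parities of $m/e$ and $n/e$, using that $m/e$ and $n/e$ cannot both be even. Everything else is routine divisibility.
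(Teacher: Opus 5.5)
The paper states this lemma without proof, as a recalled standard fact, so there is no argument of the authors to compare against. Your proof is correct and complete.

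Reducing the second and third identities to the first via $2^m+1\mid 2^{2m}-1$ and $2^n+1\mid 2^{2n}-1$ is the standard route. So is then discarding the factor $2^e-1$, which divides $2^m-1$ and is therefore coprime to $2^m+1$. Your computation that $\gcd(2m,n)$ equals $e$ or $2e$ according to the parity of $n/e$ is right, because $m/e$ and $n/e$ are coprime and so cannot both be even.

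There are two cosmetic points.
\begin{itemize}
\item Your preliminary observation says $2^N-1$ and $2^N+1$ are coprime ``for odd $N$''. You later apply it with $N=m$, which may be even. The reason you give (both numbers are odd and differ by $2$) holds for every $N\geq 1$, so simply drop the qualifier.
\item When you conclude $d=2^e+1$ in the second claim, and when you reduce to $d\mid 2^e+1$ in the third, you implicitly use $\gcd(A,BC)=\gcd(A,C)$ whenever $\gcd(A,B)=1$. It is worth stating this once.
\end{itemize}

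Compared with the paper's bare citation, your version is self-contained. It shows that all three identities follow from the classical $\gcd(2^a-1,2^b-1)=2^{\gcd(a,b)}-1$ using only elementary divisibility.
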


% \begin{lem}[Lemma 5, \cite{Xie-DLCT}]\label{lem.x}
% 	Let $n=2m$ be a positive integer. Then each element $x\in\F_{2^n}\backslash\F_{2^m}$ can be uniquely written as $x=v_1(v_2+1)/(v_1+v_2)$, where $v_1\ne v_2\in\mu_{2^m+1}\backslash \{1\}$.
% \end{lem}

\begin{lem}[Theorem 6, \cite{MKCLG}]\label{lem.eq-tri}
Let $q=2^m$, where $m$ and $r$ are positive integers.
Then the trinomial $x^q+x+b$ has either $q$ or no roots in $\mathbb{F}_{q^r}$,
according as ${\rm Tr}_m^{mr}(b)=0$ or ${\rm Tr}_m^{mr}(b)\neq 0$.
\end{lem}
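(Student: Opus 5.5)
The statement to prove is Lemma~\ref{lem.eq-tri}: for $q=2^m$, the trinomial $x^q+x+b$ has either $q$ roots or no roots in $\F_{q^r}$, according as ${\rm Tr}_m^{mr}(b)=0$ or not. I would treat $L(x)=x^q+x$ as an $\F_q$-linear map of $\F_{q^r}$ into itself. The roots of $x^q+x+b$ in $\F_{q^r}$ are then exactly the preimages of $b$ under $L$, since over characteristic two $x^q+x=b$ is the same equation. So the root set is either empty or a coset of $\ker L$.

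First I determine the kernel. We have $\ker L=\{x\in\F_{q^r}: x^q=x\}=\F_q$, because $\F_q\subseteq\F_{q^r}$. Hence $\#\ker L=q$, and whenever a root exists there are exactly $q$ of them.

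Next I identify the image of $L$. By rank--nullity over $\F_q$, $\dim_{\F_q}{\rm Im}\,L=r-1$. For any $x$,
\[
{\rm Tr}_m^{mr}(x^q+x)={\rm Tr}_m^{mr}(x)^q+{\rm Tr}_m^{mr}(x)=2\,{\rm Tr}_m^{mr}(x)=0,
\]
using that ${\rm Tr}_m^{mr}$ takes values in $\F_q$ and is invariant under $y\mapsto y^q$. Thus ${\rm Im}\,L\subseteq\ker {\rm Tr}_m^{mr}$.

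The trace ${\rm Tr}_m^{mr}:\F_{q^r}\to\F_q$ is $\F_q$-linear and surjective. It is not identically zero, since it is given by a polynomial of degree $q^{r-1}<q^r$, and a nonzero $\F_q$-linear functional is onto $\F_q$. So its kernel has dimension $r-1$, and comparing dimensions gives ${\rm Im}\,L=\ker{\rm Tr}_m^{mr}$.

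Putting this together: if ${\rm Tr}_m^{mr}(b)=0$, then $b\in{\rm Im}\,L$ and there are exactly $q$ roots; otherwise there are none. The only step needing care is the surjectivity of the relative trace, which the degree argument settles; everything else is routine linear algebra.
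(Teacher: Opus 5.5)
Your proof is correct and complete. The paper itself gives no proof of this lemma: it imports it from \cite{MKCLG} and uses it only once, in the proof of Theorem~\ref{thm.DS1}, with $q=2^e$, $r=2$, $b=1$ (where ${\rm Tr}_e^{2e}(1)=0$), to obtain some $\xi_0\in\F_{2^{2e}}\setminus\F_2$ with $\xi_0^{2^e}=\xi_0+1$.

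Your argument is the additive form of Hilbert's Theorem~90. The map $x\mapsto x^q+x$ is $\F_q$-linear with kernel $\F_q$. Its image lies in $\ker{\rm Tr}_m^{mr}$, and a dimension count shows the two are equal. The surjectivity of the relative trace is correctly settled by the degree bound $q^{r-1}<q^r$. Every step checks out, including the degenerate case $r=1$, where the map is identically zero and the trace is the identity.

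Compared with the bare citation, your route makes the paper self-contained at the cost of a few lines. It also yields slightly more than the statement: when roots exist, they form a coset $x_0+\F_q$.
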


The following lemma, which follows from Lemma 22 and the proof of
Theorem 23 in \cite{DFHR}, will also be used in proving our main result.

\begin{lem}[Lemma 22, \cite{DFHR}]\label{lem.eq}
Let $n$ and $r$ be positive integers with $r_0=\gcd(r,\,n)$. Then the polynomial
$$Q(x)=x^{2^r+1}+ax^{2^r}+bx+c\in\mathbb{F}_{2^n}[x]$$
has either $0,\,1,\,2$ or $2^{r_0}+1$ roots in $\mathbb{F}_{2^n}$.
Specially, if $n=2m$ and $x_0,\,x_1,\,x_2\in\mu_{2^m+1}$ are three distinct roots of $Q$, then
$Q(x)$ has $2^{r_1}+1$ distinct roots in $\mu_{2^m+1}$, where $r_1=\gcd(r_0,\,m)$.
Further, $x_0+Ax_1+(1+A)x_2\ne0$ for any $A\in\mathbb{F}_{2^{r_1}}$ and each root of $Q(x)$ in $\mu_{2^m+1}\backslash\{x_0,\,x_1,\,x_2\}$ can be parameterized as
$$x_A=\frac{x_1x_2+Ax_0x_2+(1+A)x_0x_1}{x_0+Ax_1+(1+A)x_2} \qquad {\rm for }\,\,A\in\F_{2^{r_1}}\setminus\F_2.$$
\end{lem}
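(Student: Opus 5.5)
The plan is to recast the root condition of $Q$ as a fixed-point condition for a Möbius transformation and then use the invariance of the cross-ratio. Write $q_r=2^r$; since this macro is not defined in the paper, I will simply write $2^r$. For $x\neq a$ we have
\[
Q(x)=0 \iff x^{2^r}(x+a)=bx+c \iff x^{2^r}=M(x):=\frac{bx+c}{x+a}.
\]
First dispose of the degenerate case. If $ab=c$, then $Q(x)=(x+a)(x^{2^r}+b)$, which has one root, or two roots when $b^{2^{-r}}\neq a$. Otherwise $M$ is a genuine Möbius map in $\mathrm{PGL}_2(\F_{2^n})$, and $x=a$ is not a root of $Q$.

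\textbf{Counting the roots.} Suppose $x_0,x_1,x_2$ are three distinct roots. For any $x$ distinct from them, set
\[
\lambda(x)=\mathrm{CR}(x_0,x_1,x_2,x)=\frac{(x+x_1)(x_0+x_2)}{(x+x_2)(x_0+x_1)}
\]
(any fixed normalization will do). The Frobenius map $y\mapsto y^{2^r}$ preserves cross-ratios in the sense that $\mathrm{CR}(y_0^{2^r},\dots)=\mathrm{CR}(y_0,\dots)^{2^r}$. Every Möbius map also preserves cross-ratios, and $x_i^{2^r}=M(x_i)$ for each $i$. Since the cross-ratio with three fixed distinct points is injective, we get
\[
x^{2^r}=M(x)\iff \lambda(x)^{2^r}=\lambda(x)\iff \lambda(x)\in\F_{2^r}\cap\F_{2^n}=\F_{2^{r_0}}.
\]
Here $\lambda$ is a bijection from $\mathbb{P}^1(\F_{2^n})$ onto itself, and $x_0,x_1,x_2$ are sent to $1,0,\infty$. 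The roots therefore correspond exactly to the points of $\mathbb{P}^1(\F_{2^{r_0}})$, except possibly the point $x=\infty$. However, $\infty$ is not a fixed point, because $M(\infty)=b\neq\infty$. So the number of roots is exactly $2^{r_0}+1$. The possibilities are consequently $0$, $1$, $2$ or $2^{r_0}+1$.

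\textbf{Roots on the unit circle.} Now let $n=2m$ and assume $x_0,x_1,x_2\in\mu_{2^m+1}$.
\begin{itemize}
\item The cross-ratio of four points of $\mu_{2^m+1}$ lies in $\F_{2^m}\cup\{\infty\}$. To see this, apply conjugation and use $\overline{y}=y^{-1}$; this shows $\overline{\lambda}=\lambda$.
\item Conversely, $\lambda^{-1}$ is a Möbius map sending the three points $1,0,\infty$ of $\mathbb{P}^1(\F_{2^m})$ to three points of $\mu_{2^m+1}$. Hence it maps the whole of $\mathbb{P}^1(\F_{2^m})$ onto $\mu_{2^m+1}$, since both are "circles" determined by three points.
\end{itemize}
It follows that the roots lying in $\mu_{2^m+1}$ correspond to
\[
\lambda\in\mathbb{P}^1(\F_{2^{r_0}})\cap\mathbb{P}^1(\F_{2^m})=\mathbb{P}^1(\F_{2^{r_1}}),
\]
which gives $2^{r_1}+1$ of them.

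\textbf{The explicit formula.} Finally, solve $\lambda(x)=A'$ for $x$, reparametrized so that the three known roots correspond to $A\in\{0,1,\infty\}$. This yields the closed form
\[
x_A=\frac{x_1x_2+Ax_0x_2+(1+A)x_0x_1}{x_0+Ax_1+(1+A)x_2}.
\]
To check it, note that it is the unique Möbius expression in $A$ taking the values $x_0$, $x_1$, $x_2$ at $A=\infty$, $A=0$ and $A=1$ respectively. The non-vanishing claim $x_0+Ax_1+(1+A)x_2\neq 0$ for $A\in\F_{2^{r_1}}$ holds because a zero would make $x_A=\infty$ a root. That contradicts $M(\infty)\neq\infty$, or more simply the fact that $\mu_{2^m+1}$ does not contain $\infty$.

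The step I expect to need the most care is matching the normalization of the cross-ratio to the stated formula for $x_A$ and tracking which parameter values give $x_0,x_1,x_2$. The degenerate case $ab=c$ must also be kept separate.
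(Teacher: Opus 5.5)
Your proof is correct. The paper gives no argument of its own for this lemma; it imports it from Lemma~22 and the proof of Theorem~23 of \cite{DFHR}. So your proposal is a self-contained alternative, not a reconstruction of an in-paper proof.

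What your route gains is that all the assertions come out of one picture. In the nondegenerate case $ab\neq c$, the root set is $\lambda^{-1}\bigl(\mathbb{P}^1(\F_{2^{r_0}})\bigr)$ for the M\"obius map $\lambda(x)=\frac{(x+x_1)(x_0+x_2)}{(x+x_2)(x_0+x_1)}$. When $x_0,x_1,x_2\in\mu_{2^m+1}$, the unit circle is $\lambda^{-1}\bigl(\mathbb{P}^1(\F_{2^m})\bigr)$. The count $2^{r_0}+1$, the refinement to $2^{r_1}+1$, the formula for $x_A$ and the non-vanishing of its denominator then follow directly. The citation keeps the paper short, but leaves the unit-circle part to be extracted from the proof of a different theorem.

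Three steps each need one explicit line.

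(i) Replace the ``circles determined by three points'' heuristic by a counting argument. Your first bullet gives $\lambda(\mu_{2^m+1})\subseteq\mathbb{P}^1(\F_{2^m})$, $\lambda$ is injective, and both sets have $2^m+1$ elements, so they are equal.

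(ii) Record that $\lambda(x_A)=A/(A+1)$. With $D=x_0+Ax_1+(1+A)x_2$, one finds $x_A+x_1=A(x_0+x_1)(x_1+x_2)/D$ and $x_A+x_2=(1+A)(x_0+x_2)(x_1+x_2)/D$. This reparametrization has coefficients in $\F_2$, so it permutes $\mathbb{P}^1(\F_{2^{r_1}})$. That is exactly why $A$ ranges over $\F_{2^{r_1}}\cup\{\infty\}$.

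(iii) For the non-vanishing claim, note that the numerator and denominator of $x_A$ cannot vanish together, because the determinant of $A\mapsto x_A$ is $(x_0+x_1)(x_0+x_2)(x_1+x_2)\neq0$. A zero denominator would therefore force $x_A=\infty$. This contradicts $x_A\in\lambda^{-1}\bigl(\mathbb{P}^1(\F_{2^m})\bigr)=\mu_{2^m+1}$.
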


\section{Four-valued differential spectra of Niho type power functions}
\label{diff-four}\label{diff-four}

In this section, we characterize the differential spectra of Niho type power functions for which \({\rm DDT}_F(1,b)\), \(b\neq1\), 
take at most three distinct values. 

We first recall the definition of Niho exponents. A positive integer $d$ is called a Niho exponent with respect to $\F_{q^2}$ if $d\equiv 2^i\,({\rm mod}\, q-1)$ for some nonnegative integer $i$. The case \(i=0\) is referred to as a normalized Niho exponent.
For Niho type power functions, \({\rm DDT}_F(1,1)\) is an exceptional value determined explicitly in \cite{Ranto-Rosendahl06}.

\begin{lem}[Theorem 7, \cite{Ranto-Rosendahl06}]\label{lem.eq-b=1}
Let \(F(x)=x^d\) be a power function over \(\F_{2^{2m}}\), where $d=s(2^m-1)+1$ and \(2\leq s\leq 2^m\).
Define $r_0=\gcd(s,\,2^m+1)$ and $r_1=\gcd(s-1,\,2^m+1)$. Then 
$${\rm DDT}_F(1,1)=2^m+(r_0-1)(r_0-2)+(r_1-1)(r_1-2).$$
\end{lem}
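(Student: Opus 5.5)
We count solutions of $(x+1)^d+x^d=1$ over $\F_{q^2}$ with $q=2^m$ and $d=s(q-1)+1$. The solutions in $\F_q$ and those in $\F_{q^2}\setminus\F_q$ are treated separately.

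\emph{Solutions in $\F_q$.} For $x\in\F_q$ we have $x^{q-1}=1$ whenever $x\neq0$, so $x^d=x$. The same holds for $x+1$, and both facts remain true (trivially) when $x=0$ or $x=1$. Hence every $x\in\F_q$ satisfies $(x+1)+x=1$, which contributes exactly $q=2^m$ solutions.

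\emph{Solutions outside $\F_q$.} For $x\notin\F_q$, put $u=x^{q-1}\in\mu_{q+1}\setminus\{1\}$ and $v=(x+1)^{q-1}\in\mu_{q+1}\setminus\{1\}$. Then $x^d=xu^s$ and $(x+1)^d=(x+1)v^s$, so the equation becomes
\[
x(u^s+v^s)=1+v^s .
\]
The standard parametrisation of $\F_{q^2}\setminus\F_q$ writes each such $x$ uniquely as $x=v_1(v_2+1)/(v_1+v_2)$ with $v_1\neq v_2\in\mu_{q+1}\setminus\{1\}$; this is equivalent to the relation $\overline{x}/x=u$, $\overline{x+1}/(x+1)=v$. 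Using $\overline{x}=ux$ and $\overline{x}+1=v(x+1)$ gives
\[
x=\frac{v+1}{u+v},\qquad u\neq v .
\]
Substituting and simplifying, the problem reduces to counting pairs $(u,v)$ in $(\mu_{q+1}\setminus\{1\})^2$ with $u\neq v$ that satisfy
\[
(v+1)(u^s+v^s)=(u+v)(1+v^s),
\]
which rearranges to
\[
u^s(v+1)+u(v^s+1)+v^s+v=0 .
\]

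\emph{Counting the pairs.} Set $r_0=\gcd(s,q+1)$ and $r_1=\gcd(s-1,q+1)$.
\begin{itemize}
\item Pairs with $v^s=1$, $v\neq1$: there are $r_0-1$ choices of $v$. The equation then reduces to $(v+1)(u^s+1)=0$, so $u^s=1$ with $u\neq1$ and $u\neq v$. This gives $(r_0-1)(r_0-2)$ pairs.
\item Pairs with $u^{s-1}=1$: by the same kind of reduction, the symmetric case contributes $(r_1-1)(r_1-2)$ pairs.
\item Pairs in neither family: these should contribute nothing. I expect this to be the main obstacle. The plan is to write the equation as $(u^s+v^s)/(u+v)=(1+v^s)/(1+v)$ and apply conjugation $z\mapsto z^{-1}$ on $\mu_{q+1}$ to obtain a second relation. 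Eliminating between the two relations should force $u^s=v^s=1$ or $u^{s-1}=v^{s-1}=1$; if it does not do so immediately, I would use the Möbius change of variable $z\mapsto (z+\omega)/(z+\omega^q)$ onto $\F_q\cup\{\infty\}$ to rule out stray solutions.
\end{itemize}

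\emph{Conclusion.} Adding the three contributions gives
\[
{\rm DDT}_F(1,1)=2^m+(r_0-1)(r_0-2)+(r_1-1)(r_1-2).
\]
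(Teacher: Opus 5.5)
\paragraph{Verdict: genuine gap in the third bullet.} The paper gives no proof of this lemma and simply cites Ranto--Rosendahl, so I judged your argument on its own. Everything before the third bullet is correct:
\begin{itemize}
\item the $2^m$ solutions in $\F_q$;
\item the bijection $x\leftrightarrow(u,v)$ onto pairs in $(\mu_{q+1}\setminus\{1\})^2$ with $u\ne v$. You can check this directly: conjugating $x=(v+1)/(u+v)$ gives $\overline{x}=ux$ and $\overline{x}+1=v(x+1)$. You don't need the cited parametrisation, whose parameters are in fact $v_1=u^{-1}$, $v_2=v^{-1}$;
\item the reduced equation $u^s(v+1)+u(v^s+1)+v^s+v=0$;
\item the two families. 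Their counts are right, and they are disjoint because $u^s=u^{s-1}=1$ forces $u=1$.
\end{itemize}
However, this only proves the lower bound $\mathrm{DDT}_F(1,1)\ge 2^m+(r_0-1)(r_0-2)+(r_1-1)(r_1-2)$. The whole content of the lemma is that no other pairs $(u,v)$ exist. Your third bullet only states an expectation (``should force'', with a Möbius fallback ``if it does not''), so the upper bound is not proved as written.

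\paragraph{How to close it.} Your conjugation plan works immediately, with no Möbius substitution. Apply $z\mapsto\overline{z}=z^{-1}$ to the reduced equation and multiply by $u^{s+1}v^{s+1}$. This gives
\[
uv^s(v+1)+u^s\bigl(v(v^s+1)+u(v^s+v)\bigr)=0 .
\]
Since $v\neq 1$, the original equation yields $u^s=\bigl(u(v^s+1)+v^s+v\bigr)/(v+1)$. Substitute this and clear the denominator. The coefficient of $u$ becomes
\[
v^s(v+1)^2+v(v^s+1)^2+(v^s+v)^2=(v+1)(v^s+1)(v^s+v),
\]
and the whole relation factors as
\[
(v^s+1)(v^s+v)(u+1)(u+v)=0 .
\]
Because $u\ne 1$ and $u\ne v$, we get either $v^s=1$ or $v^{s-1}=1$.
\begin{itemize}
\item If $v^s=1$, this is your first family.
\item If $v^{s-1}=1$, the reduced equation collapses to $(v+1)(u^s+u)=0$, so $u^{s-1}=1$. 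This is your second family.
\end{itemize}
With this computation inserted in place of the third bullet, your proof is complete.
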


Therefore, it remains to determine the values
\({\rm DDT}_F(1,b)\) for \(b\neq1\).
We first introduce the differential spectrum of a power function.

\begin{defn}
Let \( F(x) = x^d \) be a power function over \( \mathbb{F}_{2^n} \) with differential uniformity \(\Delta_F\), where $n$ and $d$ are positive integers. Define
\[\omega_i = \# \{b \in \mathbb{F}_{2^n} : {\rm DDT}_F(1,b)= i \}, \quad 0 \leq i \leq \Delta_F.\]
The differential spectrum of \( F \) is defined as the multiset  
\[\mathbb{DS}_F = \{\omega_i > 0 : 0 \leq i \leq \Delta_F\}.\]
\end{defn}

There are basic transformations which preserve the differential spectrum.

\begin{lem}[Lemma 1, \cite{BCC}]\label{lem-eqv}
Let \( F_d(x) = x^d \) and \( F_t(x) = x^t \) over $\F_{2^n}$. If there exists \( k \) such that \( t \equiv 2^k d \pmod{2^n - 1} \), or if \( \gcd(2^n - 1, d) = 1 \) and \( t \equiv d^{-1} \pmod{2^n - 1} \), then \( F_d \) and \( F_t \) have the same differential spectrum.
\end{lem}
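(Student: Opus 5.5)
We would prove the two alternatives separately, and in both cases the aim is to exhibit a bijection on exponents under which ${\rm DDT}(1,b)$ is preserved for every $b$.

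\emph{Cyclotomic equivalence.} Suppose $t\equiv 2^k d \pmod{2^n-1}$. Then $x^t=(x^d)^{2^k}$. Since the Frobenius map $y\mapsto y^{2^k}$ is additive and bijective, the equation $(x+1)^t+x^t=b$ is equivalent to $\big((x+1)^d+x^d\big)^{2^k}=b$. This in turn is equivalent to $(x+1)^d+x^d=b^{2^{-k}}$. Hence ${\rm DDT}_{F_t}(1,b)={\rm DDT}_{F_d}(1,b^{2^{-k}})$. As $b\mapsto b^{2^{-k}}$ is a permutation of $\F_{2^n}$, each number $\omega_i$ is the same for $F_d$ and $F_t$.

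\emph{Inversion.} Suppose $\gcd(2^n-1,d)=1$, so that $F_d$ is a permutation with compositional inverse $F_t$. We work with general $a\neq0$, not only $a=1$. The number of pairs $(x,y)$ with $x+y=a$ and $F_d(x)+F_d(y)=b$ equals ${\rm DDT}_{F_d}(a,b)$. Set $u=F_d(x)$ and $v=F_d(y)$. Then these pairs correspond bijectively to pairs $(u,v)$ with $u+v=b$ and $F_t(u)+F_t(v)=a$. Therefore ${\rm DDT}_{F_d}(a,b)={\rm DDT}_{F_t}(b,a)$ for all $a,b$, so the DDT of $F_t$ is the transpose of that of $F_d$.

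\emph{Reduction to the row $a=1$.} For a power function, ${\rm DDT}(a,b)={\rm DDT}(1,b/a^d)$. Consequently, the multiset of entries with $a\ne0$ is $2^n-1$ copies of the row $a=1$, so the $\omega_i$ are determined by the multiset of all entries with $a\neq0$.

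\emph{Main obstacle: the zero row and column.} Under the transpose, the entries with $a\neq0$ correspond to those with $b\neq0$, so the entries with $a=0$ and $b=0$ have to be accounted for. Since $F_d$ is a permutation, ${\rm DDT}_{F_d}(a,0)=0$ for $a\ne0$, and ${\rm DDT}_{F_d}(0,b)=0$ for $b\ne0$. The same holds for $F_t$. Thus the multiset of entries over $a\ne0$ for $F_t$ equals the multiset of entries over $b\ne0$ for $F_d$. This in turn equals the multiset over $a\ne0$ for $F_d$, since the only differences are $(2^n-1)$ zeros on each side together with the corner entry. Dividing by $2^n-1$ gives equal $\omega_i$ for $F_d$ and $F_t$.
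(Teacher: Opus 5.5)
Your proof is correct. The paper does not prove this lemma; it quotes it from \cite{BCC}, so the natural comparison is with the usual short argument. Your cyclotomic case is that argument.

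For the inversion case you use the same key fact: the DDT of $F_t=F_d^{-1}$ is the transpose of the DDT of $F_d$. You then pass through the multiset of all entries with $a\neq0$ and divide by $2^n-1$. The standard route instead stays in the single row $a=1$. For $b\neq0$,
\[
{\rm DDT}_{F_t}(1,b)={\rm DDT}_{F_d}(b,1)={\rm DDT}_{F_d}\left(1,b^{-d}\right),
\]
and ${\rm DDT}_{F_t}(1,0)={\rm DDT}_{F_d}(1,0)=0$ because both maps are permutations. Since $b\mapsto b^{-d}$ permutes $\F_{2^n}^*$, the two rows $a=1$ are rearrangements of each other.

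The direct route gives an explicit bijection on $b$ and needs no bookkeeping of the zero row and column. Your version only uses that every row $a\neq0$ is a rearrangement of row $1$, so it never has to name the correspondence.

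One small imprecision: the corner entry $(0,0)$ belongs to neither multiset you compare (the one over $a\neq0$ and the one over $b\neq0$). The only discrepancy between them is the $2^n-1$ zeros on each side, so the conclusion is unaffected.
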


We begin with a general result for power functions, which will later be specialized to Niho type power functions.

\begin{thm}\label{thm.diff-Wal-four}
Let \(F(x)=x^d\) be a power function over \(\F_{2^n}\).
Suppose that $\mathbb{DS}_F=\{\omega_0,\,\omega_{t_1},\omega_{t_2},\,\omega_{N}\}$,
where $t_1,t_2,N$ are positive integers with \(t_1<t_2\) and $\omega_{N}=1$. Then
\begin{eqnarray*}
% \nonumber to remove numbering (before each equation)
&&\omega_0=2^n-1-\frac{N^2-\aleph_F+(t_2+t_1)(2^n-N)}{t_1t_2};\\
&&\omega_{t_1}=\frac{N^2-\aleph_F+(2^n-N)t_2}{t_1(t_2-t_1)};\\
&&\omega_{t_2}=\frac{\aleph_F-N^2+t_1(N-2^n)}{t_2(t_2-t_1)}.
\end{eqnarray*}
Here $\aleph_F=\frac{\sum_{a,b\in\F_{2^n}\backslash\{(0,\,0)\}}W_F(a,\,b)^4}{2^{2n}(2^n-1)}$. Particularly, $\aleph_F=2^{-2n}\sum_{b\in\F_{2^n}^*}W_F(b)^4$ if $\gcd(d,\,2^n-1)=1$.
\end{thm}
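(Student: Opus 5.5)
We set up three linear equations in the unknowns $\omega_0,\omega_{t_1},\omega_{t_2}$ and solve them. Write $\delta(b)={\rm DDT}_F(1,b)$. By the hypothesis, exactly one $b$ has $\delta(b)=N$, and every other $b$ has $\delta(b)\in\{0,t_1,t_2\}$.

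The first two identities are the standard counting relations
\[\sum_{b}\omega_{\delta(b)}\text{-count}=\omega_0+\omega_{t_1}+\omega_{t_2}+1=2^n\qquad\text{and}\qquad \sum_b \delta(b)=t_1\omega_{t_1}+t_2\omega_{t_2}+N=2^n.\]
The second holds because each $x$ contributes to exactly one $b$.

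The third identity comes from the second moment $\sum_b\delta(b)^2=t_1^2\omega_{t_1}+t_2^2\omega_{t_2}+N^2$. We express it through fourth powers of Walsh coefficients. For each $a\neq0$, $\sum_b {\rm DDT}_F(a,b)^2$ counts the quadruples $(x,y,a)$ with $x+y+a\ne\ldots$; the cleaner route is the standard identity
\[\sum_{a,b\in\F_{2^n}}W_F(a,b)^4=2^{2n}\sum_{\alpha,\beta}{\rm DDT}_F(\alpha,\beta)^2 .\]
To prove it, expand $W_F^4$ as a sum over $(x_1,\dots,x_4)$ and use orthogonality in $a$ and $b$. This forces $x_1+x_2+x_3+x_4=0$ and $F(x_1)+F(x_2)+F(x_3)+F(x_4)=0$, and the number of such quadruples is $\sum_{\alpha,\beta}{\rm DDT}_F(\alpha,\beta)^2$. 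The term $(a,b)=(0,0)$ contributes $2^{4n}$. The term $\alpha=0$ contributes ${\rm DDT}_F(0,0)^2=2^{2n}$, which matches $2^{4n}/2^{2n}$. Hence
\[\sum_{\alpha\ne0,\beta}{\rm DDT}_F(\alpha,\beta)^2=\frac{1}{2^{2n}}\sum_{(a,b)\ne(0,0)}W_F(a,b)^4 .\]
For a power function ${\rm DDT}_F(\alpha,\beta)={\rm DDT}_F(1,\beta/\alpha^d)$, so the left side equals $(2^n-1)\sum_b\delta(b)^2$. Therefore $\sum_b\delta(b)^2=\aleph_F$.

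If $\gcd(d,2^n-1)=1$, then $x\mapsto x^d$ is a permutation, so $W_F(a,b)=W_F(1,b a^{-1/d})$ for $a\neq0$. Also $W_F(0,b)=0$ for $b\neq0$. Summing over $a\ne0$ gives the factor $2^n-1$, which yields the particular form $\aleph_F=2^{-2n}\sum_{b\ne0}W_F(b)^4$.

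It then remains to solve the linear system
\[t_1\omega_{t_1}+t_2\omega_{t_2}=2^n-N,\qquad t_1^2\omega_{t_1}+t_2^2\omega_{t_2}=\aleph_F-N^2 .\]
Multiplying the first equation by $t_2$ and subtracting gives $t_1(t_2-t_1)\omega_{t_1}=t_2(2^n-N)-(\aleph_F-N^2)$. Multiplying it by $t_1$ instead gives $t_2(t_2-t_1)\omega_{t_2}=\aleph_F-N^2-t_1(2^n-N)$. These are the stated formulas for $\omega_{t_1}$ and $\omega_{t_2}$.

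Finally, $\omega_0=2^n-1-\omega_{t_1}-\omega_{t_2}$. Putting the two fractions over the common denominator $t_1t_2(t_2-t_1)$, the numerator simplifies to $(t_2-t_1)\bigl[(N^2-\aleph_F)+(t_1+t_2)(2^n-N)\bigr]$. This gives the stated formula for $\omega_0$.

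The only nontrivial step is the fourth-moment identity, and in particular the careful bookkeeping of the $(0,0)$ term and the $\alpha=0$ term. We would verify it by the orthogonality expansion described above.
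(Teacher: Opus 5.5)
Your proof is correct and follows essentially the same route as the paper. Both arguments expand the fourth moment by orthogonality into a count of quadruples, split off the $x_1+x_2=0$ contribution of $2^{2n}$ (your $\alpha=0$ term), use power-function scaling to rewrite the rest as $(2^n-1)\sum_b {\rm DDT}_F(1,b)^2$, and then solve the three-equation linear system. The one detail you leave implicit is that $W_F(0)=\sum_{x}(-1)^{{\rm Tr}_1^n(x^d)}=0$ when $x^d$ permutes $\F_{2^n}$, which is what lets you restrict to $b\neq0$ in the particular form of $\aleph_F$.
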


\begin{proof}
According to the definition of $W_{F}(a,b)$, we have
\begin{equation}\label{eq.W-N4} 
\sum_{a,b\in\F_{2^n}}\!\!W_{F}(a,b)^4=\sum_{a,b\in\F_{2^n}}\sum_{x_1,x_2,x_3,x_4\in\F_{2^n}}
(-1)^{{\rm Tr}_1^n\left(a(x_1^d+x_2^d+x_3^d+x_4^d)+b(x_1+x_2+x_3+x_4)\right)}
= 2^{2n}N_4    
\end{equation}
% \begin{eqnarray}
% \sum_{a,b\in\F_{2^n}}\!\!W_{F}(a,b)^4
% &&=\sum_{a,b\in\F_{2^n}}\sum_{x_1,x_2,x_3,x_4\in\F_{2^n}}
% (-1)^{{\rm Tr}_1^n\left(a(x_1^d+x_2^d+x_3^d+x_4^d)+b(x_1+x_2+x_3+x_4)\right)} \nonumber \\ 
% &&= 2^{2n}N_4 \label{eq.W-N4} 
% \end{eqnarray}
and $N_4$ is the number in $(\F_{2^n})^4$ of
\[\left\{
\begin{array}{l}
x_1+x_2+x_3+x_4=0,\\[5pt]
x_1^d+x_2^d+x_3^d+x_4^d=0,
\end{array}
\right.\]
which is equivalent to
\begin{equation}\label{eq.N4}
\left\{
\begin{array}{l}
x_1+x_2=x_3+x_4=u,\\[5pt]
x_1^d+(x_1+u)^d=x_3^d+(x_3+u)^d=v.
\end{array}
\right.
\end{equation}
When $u=0$, then \eqref{eq.N4} has solutions if and only if $v=0$ and thus has $2^{2n}$ solutions. When $u\ne 0$, dividing both sides of the equations by $u^d$ and letting $y_1=x_1/u$, $y_2=x_3/u$, we deduce
\[\left\{
\begin{array}{l}
x_2=u(y_1+1), x_4=u(y_2+1),\\[5pt]
y_1^d+(y_1+1)^d=y_2^d+(y_2+1)^d=v/u^d.
\end{array}
\right.\]
This implies that $N_4=2^{2n}+(2^n-1)\left(t_1^2\omega_{t_1}+t_2^2\omega_{t_2}+N^2\omega_{N}\right)$. Combining this with $\omega_{N}=1$ and $W_F(0,\,0)=2^n$, \eqref{eq.W-N4} gives
\[\sum_{a,b\in\F_{2^n}\backslash\{(0,\,0)\}}\!\!
W_F(a,\,b)^4=2^{2n}(2^n-1)\left(t_1^2\omega_{t_1}+t_2^2\omega_{t_2}+{N}^2\right).\]
Denote $\aleph_F=\frac{\sum_{a,b\in\F_{2^n}\backslash\{(0,\,0)\}}W_F(a,\,b)^4}{2^{2n}(2^n-1)}$. Together with the standard identities
\begin{eqnarray*}
% \nonumber to remove numbering (before each equation)
&&\omega_0+\omega_{t_1}+\omega_{t_2}
+\omega_{N}=2^{n}; \\
&&t_1\omega_{t_1}+t_2\omega_{t_2}
+N\omega_{N}=2^{n},
\end{eqnarray*}
a calculation gives the values of $\omega_0$, $\omega_{t_1}$ and $\omega_{t_2}$ as desired.
Particularly, if $\gcd(d,\,2^n-1)=1$, then for any $a\in\F_{2^n}^*$, there exists unique $c\in\F_{2^n}^*$ such that $ac^d=1$ and then $W_F(a,\,b)=W_F(1,\,bc)=W_F(bc)$. Combining with $W_F(a,\,0)=0$ for $a\in\F_{2^n}^*$, we have
$\aleph_F=\frac{(2^n-1)\sum_{b\in\F_{2^n}^*}W_F(b)^4}{2^{2n}(2^n-1)}
=2^{-2n}\sum_{b\in\F_{2^n}^*}W_F(b)^4$.
This completes the proof.
\end{proof}

We next focus on Niho type power functions. 
To investigate their differential properties, we first recall the following characterization of the exponential sum \(W_F(a,b)\).

\begin{lem}[{Lemma 2, \cite{Li-HKT}}]\label{lem.WF}
Let $F(x)=x^d$ with $d=s(2^m-1)+1$, where $2\leq s\leq 2^m$ is an integer. Then for any $a,b\in\F_{2^{2m}}$, the exponential sum
\[W_F(a,b)=(N(a,b)-1)2^m,\]
where $N(a,b)$ denotes the number of elements $u\in \mu_{2^m+1}$ satisfying
\[au^s+\overline{a} u^{1-s}+\overline{b}u+b=0.\]
\end{lem}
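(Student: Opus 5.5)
The plan is to compute the exponential sum directly through the polar decomposition of $\F_{q^2}^*$, where $q=2^m$. Since $\gcd(q-1,q+1)=1$, every $x\in\F_{q^2}^*$ can be written uniquely as $x=yu$ with $y\in\F_q^*$ and $u\in\mu_{q+1}$.

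\textbf{Step 1: compute $x^d$.} For $y\in\F_q^*$ we have $y^{q-1}=1$, so $y^d=y$. For $u\in\mu_{q+1}$ we have $u^q=u^{-1}$, hence $u^{q-1}=u^{-2}$ and $u^d=u^{1-2s}$. Therefore $x^d=y\,u^{1-2s}$.

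\textbf{Step 2: reduce the trace to $\F_q$.} Write ${\rm Tr}_1^{2m}={\rm Tr}_1^m\circ{\rm Tr}_m^{2m}$. Because $y\in\F_q$ and $\overline{u}=u^{-1}$, we get
\[
{\rm Tr}_1^{2m}\left(ax^d+bx\right)={\rm Tr}_1^m\left(y\,c(u)\right),\qquad c(u)=au^{1-2s}+\overline{a}u^{2s-1}+bu+\overline{b}u^{-1}\in\F_q .
\]

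\textbf{Step 3: sum over $y$.} The sum $\sum_{y\in\F_q^*}(-1)^{{\rm Tr}_1^m(y\,c(u))}$ equals $q-1$ if $c(u)=0$ and $-1$ otherwise. Let $M$ be the number of $u\in\mu_{q+1}$ with $c(u)=0$. Adding the contribution $1$ of $x=0$ gives
\[
W_F(a,b)=1+(q-1)M-(q+1-M)=(M-1)q .
\]

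\textbf{Step 4: identify $M$ with $N(a,b)$.} It remains to show $M=N(a,b)$, which only requires bijections of $\mu_{q+1}$ preserving the zero count.

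First, $uc(u)=au^{2-2s}+\overline{a}u^{2s}+bu^2+\overline{b}$. Since $q+1$ is odd, $u\mapsto u^2=t$ is a bijection of $\mu_{q+1}$. So $M$ equals the number of zeros $t\in\mu_{q+1}$ of
\[
g(t)=at^{1-s}+\overline{a}t^{s}+bt+\overline{b}.
\]

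Next, $g(t)=0$ if and only if $\overline{g(t)}=0$. On $\mu_{q+1}$ we have $\overline{t}=t^{-1}$, so
\[
\overline{g(t)}=\overline{a}t^{s-1}+at^{-s}+\overline{b}t^{-1}+b .
\]
Finally, apply the bijection $t\mapsto t^{-1}$ of $\mu_{q+1}$. This turns $\overline{g}$ into $au^{s}+\overline{a}u^{1-s}+\overline{b}u+b$, which is exactly the polynomial defining $N(a,b)$. Hence $M=N(a,b)$.

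\textbf{Main obstacle.} The only delicate point is this last bookkeeping step. One must choose the right combination of squaring, conjugation and inversion so that the resulting equation matches the stated form exactly, and check that each map really is a bijection of $\mu_{q+1}$.
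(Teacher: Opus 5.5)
Your proof is correct, and every step checks out: the polar decomposition $x=yu$ with $y\in\F_q^*$ and $u\in\mu_{q+1}$, the evaluation $x^d=yu^{1-2s}$, summing out $y$ to get $W_F(a,b)=(M-1)q$, and transporting the zero count through the permutations $u\mapsto u^2$, conjugation and $u\mapsto u^{-1}$ of $\mu_{q+1}$ (the hypothesis $2\le s\le 2^m$ is never actually needed). The paper does not reprove this lemma but quotes it from Li, Helleseth, Kholosha and Tang, and I expect the proof there is this same standard polar-decomposition argument.
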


When \(a=1\), the exponential sum \(W_F(a,b)\) reduces to the Walsh transform of \(F\), whose possible values in the four-valued case were determined in \cite{Ranto-R}. 
Using this result, we derive the value distribution of \(W_F(a,b)\).

% \begin{lem}[Theorem 12, \cite{Ranto-R}]\label{lem.WF}
% Let $F(x)=x^d$ and $d$ is of Niho type. If the Walsh transform of $F(x)$ is four-valued, then $W_F(b)\in\{-2^m, 0, 2^m, 2^{m+j} \},$
% where $b\in\F_{2^n}$ and $j>0$ satisfies $2j\mid m$.
% \end{lem}

\begin{lem}\label{lem-Wal4}
Let $n=2m$ and $F(x)=x^d$ with $d=s(2^m-1)+1$, where $2\leq s\leq 2^m$. Denote $r_0=\gcd(s,\,2^m+1)$ and $r_1=\gcd(s-1,\,2^m+1)$. If $W_F(a,\,b)$ takes at most four values as $a,b$ run over $\F_{2^{n}}\times\F_{2^{n}}\backslash\{(0,\,0)\}$, then the value distribution of $W_F(a,\,b)$ is given by
\begin{equation*}\left\{\begin{array}{rll}
			-2^m,&\,\, 2^{m-1}(2^n-1)(2^m-1-\frac{N-1}{2^j+1})\,&\,{\rm times},\\
			0,&\,\,(2^n-1)(2^{m-j}(N-1)+1)\,&\,{\rm times},\\
			2^m,&\,\,2^{m-1}(2^n-1)(2^m+1-\frac{N-1}{2^j-1})\,&\,{\rm times},\\
			2^{m+j},&\,\,\frac{2^{m-j}(2^n-1)(N-1)}{2^{2j}-1}\,&\,{\rm times},\\
		\end{array}\right.
	\end{equation*}
    where $N=2^m+(r_0-1)(r_0-2)+(r_1-1)(r_1-2)$ and $j>0$ with $2j|m$.
\end{lem}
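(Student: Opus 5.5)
We need the value distribution of $W_F(a,b)$ over $(a,b)\neq(0,0)$. The plan is to pin down the four possible values first, and then solve for their multiplicities from three power-moment identities.

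\emph{The four values.} By Lemma \ref{lem.WF}, $W_F(a,b)=(N(a,b)-1)2^m$, so every value is $2^m$ times an integer that is at least $-1$. When $a=0$ and $b\ne0$ we get $W_F(0,b)=0$. This corresponds to $N(0,b)=1$: indeed $\overline b u+b=0$ has exactly one root $u=b^{1-q}\in\mu_{q+1}$. When $a\neq 0$, the sums $W_F(a,b)$ are Walsh coefficients of the Boolean function ${\rm Tr}_1^n(ax^d)$. A four-valued spectrum is forced to be $\{-2^m,0,2^m,2^{m+j}\}$ with $2j\mid m$; this is the Ranto--Rosendahl result for $a=1$, and it transfers to general $a$ by the same argument (divisibility plus Parseval). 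So the values are $-2^m,0,2^m,2^{m+j}$. Write $n_{-1},n_0,n_1,n_j$ for their frequencies over $(a,b)\ne(0,0)$.

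\emph{Moment identities.} I would use the first, second and fourth moments of $W_F(a,b)$ over all $(a,b)$, together with the total count. The total count is $n_{-1}+n_0+n_1+n_j=2^{2n}-1$. The first moment is $\sum_{a,b}W_F(a,b)=2^{2n}\cdot\#\{x:x=0,\ x^d=0\}=2^{2n}$; subtracting $W_F(0,0)=2^n$ gives $\sum_{(a,b)\ne(0,0)}W=2^{2n}-2^n$. The second moment is $\sum_{a,b}W^2=2^{2n}\cdot\#\{(x_1,x_2):x_1=x_2\}=2^{3n}$; subtracting $2^{2n}$ gives $2^{2n}(2^n-1)$. The fourth moment is $\sum_{a,b}W^4=2^{2n}N_4$, as in the proof of Theorem \ref{thm.diff-Wal-four}, and
\[
N_4=2^{2n}+(2^n-1)\sum_{c}{\rm DDT}_F(1,c)^2 .
\]
Dividing by $2^m$ turns these into linear equations in the unknowns $n_{-1},n_0,n_1,n_j$. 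The first, second and fourth moments carry the weights $(-1,0,1,2^j)$, $(1,0,1,2^{2j})$ and $(1,0,1,2^{4j})$ respectively. Combined with the total-count equation, this is a Vandermonde-type system in the distinct nodes $-1,0,1,2^j$, so it determines the four frequencies uniquely once $\sum_c{\rm DDT}_F(1,c)^2$ is known.

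\emph{Main obstacle: the fourth moment.} The target formula depends only on $N$, so I must show $\sum_c {\rm DDT}_F(1,c)^2=2^n+N^2-N$, or equivalently find the fourth moment by another route. I would avoid needing the full DDT as follows. Since $\sum_c{\rm DDT}(1,c)=2^n$, the fourth moment is a linear function of $\sum_c {\rm DDT}(1,c)({\rm DDT}(1,c)-1)$. I would then try to express $n_j$ directly, for instance through the count $\#\{(a,b):N(a,b)=2^j+1\}$ tied to $b=1$ in Lemma \ref{lem.eq-b=1}. An alternative is to use the third moment, $\sum W^3=2^{2n}\cdot\#\{x_1+x_2+x_3=0,\ \sum x_i^d=0\}$, which equals $2^{2n}\big(2^n+(2^n-1)({\rm DDT}_F(1,0)-2)\big)$, hmm, counting pairs with $x^d+(x+1)^d=1$. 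That count is exactly ${\rm DDT}_F(1,1)=N$, which is given by Lemma \ref{lem.eq-b=1}. So I would replace the fourth moment by the third moment: $\sum_{a,b}W^3=2^{2n}\big(2^n+(2^n-1)N\big)$, and after removing $(0,0)$ the remainder is $2^{2n}(2^n-1)N$. Solving the system with the first, second and third moments, whose weights are $-1,1,2^{3j}$ on the nonzero values, then yields the stated expressions. Checking that the algebra produces $2^{m-1}(2^n-1)(2^m-1-\frac{N-1}{2^j+1})$ and the other three formulas is routine.
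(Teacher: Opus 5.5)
Your final route (the Ranto--Rosendahl value set $\{-2^m,0,2^m,2^{m+j}\}$, then the total count together with the first three moments, where $N_3=2^n+(2^n-1)N$ comes from ${\rm DDT}_F(1,1)=N$) is exactly the paper's proof, and solving that system does give the stated frequencies. You should delete the abandoned fourth-moment detour, including the stray ${\rm DDT}_F(1,0)-2$ expression, because the identity $\sum_c{\rm DDT}_F(1,c)^2=2^n+N^2-N$ stated there is false; the correct value, which the paper derives as a consequence of this lemma, is $2^n+2^{m+j}(N-1)$.
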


\begin{proof}
According to the definition of $W_{F}(a,b)$, for $r\geq1$, we have
\begin{equation}\label{eq-WF}
\sum_{a,b\in\F_{2^n}}\!\!W_{F}(a,b)^r=\sum_{a,b\in\F_{2^n}}\sum_{x_1,x_2,\dots,x_r\in\F_{2^n}}
(-1)^{{\rm Tr}_1^n\left(a(x_1^d+x_2^d+\cdots+x_r^d)+b(x_1+x_2+\cdots+x_r)\right)}
= 2^{2n} N_r    
\end{equation}
and $N_r$ by the number of solutions in $(\F_{2^n})^r$ of
\begin{equation}\label{x1-xr}
\left\{
\begin{array}{l}
x_1+x_2+\cdots+x_r=0,\\[5pt]
x_1^d+x_2^d+\cdots+x_r^d=0.
\end{array}
\right.
\end{equation}
If $W_F(a,\,b)$ takes at most four values as $a,b$ run over $\F_{2^{n}}\times\F_{2^{n}}\backslash\{(0,\,0)\}$, from \cite[Theorem 12]{Ranto-R} one knows
$$W_F(a,b)\in\left\{-2^m,\, 0,\, 2^m,\, 2^{m+j}\right\},$$
where $j>0$ satisfies $2j\mid m$. Let $M_i$ denote the pairs of $(a,b)\in\F_{2^n}\times\F_{2^n}$ such that $W_F(a,b)=i$. Combining with $W_F(0,\,0)=2^n$, \eqref{eq-WF} yields
\begin{equation}\label{eq-Mi}
\begin{aligned}
&M_{-2^m}+M_0+M_{2^m}+M_{2^{m+j}}+1=2^{2n}; \\
&-2^mM_{-2^m}+2^mM_{2^m}+2^{m+j}M_{2^{m+j}}+2^n=2^{2n}N_1;\\
&2^nM_{-2^{m}}+2^nM_{2^m}+2^{n+2j}M_{2^{m+j}}+2^{2n}=2^{2n}N_2;\\
&-2^{3m}M_{-2^m}+2^{3m}M_{2^m}+2^{3m+3j}M_{2^{m+j}}+2^{3n}=2^{2n} N_3.
\end{aligned}
\end{equation}
According to the definition of $N_r$, it can be readily derived that $N_1=1$ and $N_2=2^n$. When $r=3$, if $x_3=0$, equation \eqref{x1-xr} turns into the case for $r=2$; if $x_3\ne0$, dividing both sides of equations in \eqref{x1-xr}, then \eqref{x1-xr} becomes
\[\left\{
\begin{array}{l}
y_1+y_2=1,\\[5pt]
y_1^d+y_2^d=1,
\end{array}
\right.\]
where $y_1=\frac{x_1}{x_3}$ and $y_2=\frac{x_2}{x_3}$. Combining with Lemma \ref{lem.eq-b=1} we obtain that $N_3=2^n+(2^n-1)N$ and $N=2^m+(r_0-1)(r_0-2)+(r_1-1)(r_1-2)$. Substituting the values of $N_1$, $N_2$ and $N_3$ into \eqref{eq-Mi},  we obtain the desired result by solving this system of equations. This completes the proof.
\end{proof}

By Lemma \ref{lem-Wal4}, if \(W_F(a,b)\) takes exactly four distinct values, then $\aleph_F=2^n+2^{m+j}(N-1).$
Substituting this expression into Theorem \ref{thm.diff-Wal-four} immediately yields the following corollary.

\begin{cor}\label{cor.diff-Wal}
Let $n=2m$, and let $F(x)=x^d$ be a power function over $\F_{2^n}$ with
$d=s(2^m-1)+1$ for some integer $2\le s\le 2^m$. Define $r_0=\gcd(s,\,2^m+1)$,
$r_1=\gcd(s-1,\,2^m+1)$, and $N=2^m+(r_0-1)(r_0-2)+(r_1-1)(r_1-2)$.
Suppose that ${\rm DDT}_F(1,b)\in\{0,t_1,t_2\}$ for all $b\in\F_{2^n}\setminus\{1\}$, where $0<t_1<t_2$.
If the exponential sum \(W_F(a,b)\) takes exactly four distinct values as \((a,b)\) ranges over $\F_{2^{n}}\times\F_{2^{n}}\backslash\{(0,\,0)\}$, then
$\mathbb{DS}_F=\{\omega_0,\,\omega_{t_1},\omega_{t_2},\,\omega_{N}=1\}$ with
\begin{eqnarray*}
% \nonumber to remove numbering (before each equation)
&&\omega_0=2^n-1-\frac{N^2-2^n-2^{m+j}(N-1)+(t_2+t_1)(2^n-N)}{t_1t_2};\\
&&\omega_{t_1}=\frac{N^2-2^n-2^{m+j}(N-1)+(2^n-N)t_2}{t_1(t_2-t_1)};\\
&&\omega_{t_2}=\frac{2^n+2^{m+j}(N-1)-N^2+t_1(N-2^n)}{t_2(t_2-t_1)} \,\,\,{\rm with} \,\,\,2j|m.
\end{eqnarray*}
\end{cor}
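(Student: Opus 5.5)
The plan is to apply Theorem \ref{thm.diff-Wal-four} with $\aleph_F$ computed from the value distribution in Lemma \ref{lem-Wal4}. Three hypotheses of Theorem \ref{thm.diff-Wal-four} have to be checked first.

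\emph{The shape of the spectrum.} By assumption ${\rm DDT}_F(1,b)\in\{0,t_1,t_2\}$ for $b\neq 1$. Lemma \ref{lem.eq-b=1} gives ${\rm DDT}_F(1,1)=N$. Since $N\ge 2^m$, we expect $N>t_2$; in any case the value at $b=1$ is counted separately with $\omega_N=1$. Hence $\mathbb{DS}_F=\{\omega_0,\omega_{t_1},\omega_{t_2},\omega_N=1\}$, which is the form required by Theorem \ref{thm.diff-Wal-four}. If $N$ happened to equal $t_2$, the same counting identities used in the proof of Theorem \ref{thm.diff-Wal-four} still hold, provided we read $\omega_{t_2}$ as counting only $b\ne1$. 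So the formulas are unaffected.

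\emph{Computing $\aleph_F$.} Here we use the general definition $\aleph_F=\frac{1}{2^{2n}(2^n-1)}\sum_{(a,b)\neq(0,0)}W_F(a,b)^4$. This avoids needing $\gcd(d,2^n-1)=1$, which need not hold for Niho exponents. Since $W_F(a,b)$ takes exactly four values, Lemma \ref{lem-Wal4} applies. Writing $A=2^n-1$, we get
\[
\sum_{(a,b)\ne(0,0)}W_F(a,b)^4=2^{4m}\bigl(M_{-2^m}+M_{2^m}\bigr)+2^{4m+4j}M_{2^{m+j}}.
\]
From the lemma, $M_{-2^m}+M_{2^m}=2^{m-1}A\bigl(2^{m+1}-(N-1)(\tfrac{1}{2^j+1}+\tfrac{1}{2^j-1})\bigr)=A\bigl(2^{n}-\tfrac{2^{m+j}(N-1)}{2^{2j}-1}\bigr)$.

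Adding $2^{4j}M_{2^{m+j}}=\frac{2^{m+3j}A(N-1)}{2^{2j}-1}$ gives
\[
2^{4m}A\Bigl(2^n+\tfrac{(N-1)2^{m+j}(2^{2j}-1)}{2^{2j}-1}\Bigr)=2^{2n}A\bigl(2^n+2^{m+j}(N-1)\bigr).
\]
Therefore $\aleph_F=2^n+2^{m+j}(N-1)$, as announced before the corollary.

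\emph{Substitution.} Inserting this value of $\aleph_F$ into the three formulas of Theorem \ref{thm.diff-Wal-four} gives exactly the stated $\omega_0,\omega_{t_1},\omega_{t_2}$. The condition $2j\mid m$ is inherited from Lemma \ref{lem-Wal4}.

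The only step needing care is the algebraic simplification of $M_{-2^m}+M_{2^m}$ together with the $M_{2^{m+j}}$ term. Its key feature is that the $(N-1)$ contributions combine through $\frac{1}{2^j+1}+\frac{1}{2^j-1}=\frac{2^{j+1}}{2^{2j}-1}$. This is routine but must be done carefully. The remaining steps are direct substitution.
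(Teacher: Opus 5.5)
Your proposal is correct and follows the paper's route: take ${\rm DDT}_F(1,1)=N$ from Lemma \ref{lem.eq-b=1}, compute $\aleph_F=2^n+2^{m+j}(N-1)$ from the four-valued distribution in Lemma \ref{lem-Wal4}, and substitute into Theorem \ref{thm.diff-Wal-four}. You correctly use the general definition of $\aleph_F$ over all $(a,b)\neq(0,0)$, and your explicit fourth-moment computation is exactly the ``by calculation'' step that the paper leaves implicit.
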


% \begin{proof}
% If the exponential sum \(W_F(a,b)\) takes exactly four distinct values as \((a,b)\) ranges over
% $\F_{2^{n}}\times\F_{2^{n}}\backslash\{(0,\,0)\}$, then the values of \(W_F(a,b)\) are characterized by Lemma \ref{lem-Wal4}. By calculation we have $$\sum_{a,b\in\F_{2^n}\backslash\{(0,\,0)\}}W_F(a,\,b)^4=2^{2n}(2^n-1)(2^n+2^{m+j}(N-1))$$
% and then
% \[\aleph=2^n+2^{m+j}(N-1).\]
% Substituting it into Theorem \ref{thm.diff-Wal-four} gives the desired values of $\omega_0$, $\omega_{t_1}$ and $\omega_{t_2}$. This completes the proof.
% \end{proof}

\section{The differential spectrum of the Niho power function $F(x)=x^{\left(2^k+1\right)^{-1}\left(2^m-1\right)+1}$ }\label{diff1}
In this section, we determine the differential spectrum of the Niho type
power function $F(x)=x^{s(2^m-1)+1}$, where
$s=\left(2^k+1\right)^{-1}$ denotes the multiplicative inverse of $2^k+1$ modulo $2^m+1$, and $\gcd\left(2^k+1,2^m+1\right)=1$.

\begin{thm}\label{thm.diff-4}
Let $F(x)=x^{s(2^m-1)+1}$ be a power function over $\F_{2^n}$, where $n=2m$, $s=\left(2^k+1\right)^{-1}$, $m$ and $k$ are positive integers with $\gcd\left(2^k+1,\,2^m+1\right)=1$ and $\gcd(k,\,m)=e$. Suppose that ${\rm DDT}_F(1,b)\in\{0,t_1,t_2\}$ for $b\in\F_{2^n}\backslash\{1\}$, where
$0<t_1<t_2$. Then
$\mathbb{DS}_F=\{\omega_0,\,\omega_{t_1},\omega_{t_2},\,\omega_{2^m}=1\}$ with
\[\omega_0=2^n-1-\frac{2^m\left(t_2+t_1-2^{e}\right)(2^m-1)}{t_1t_2},\,\omega_{t_1}=\frac{2^m(2^m-1)(t_2-2^e)}{t_1(t_2-t_1)},\,\omega_{t_2}=\frac{2^m(2^m-1)(2^e-t_1)}{t_2(t_2-t_1)}.\]
% \begin{eqnarray*}
% % \nonumber to remove numbering (before each equation)
% &&\omega_0=2^n-1-\frac{2^m(t_2+t_1-2^{e})(2^m-1)}{t_1t_2};\\
% &&\omega_{t_1}=\frac{2^m(2^m-1)(t_2-2^e)}{t_1(t_2-t_1)};\\
% &&\omega_{t_2}=\frac{2^m(2^m-1)(2^e-t_1)}{t_2(t_2-t_1)}.
% \end{eqnarray*}
\end{thm}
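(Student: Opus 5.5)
The plan is to specialize Corollary~\ref{cor.diff-Wal}. We need $N$ and $j$, and we need to check that $W_F(a,b)$ takes exactly four values.

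\emph{Computing $N$.} Since $s(2^k+1)\equiv 1 \pmod{2^m+1}$, any common divisor of $s$ and $2^m+1$ divides $1$, so $r_0=1$. For $r_1$: if $r\mid s-1$ and $r\mid 2^m+1$, then $s\equiv1\pmod r$. Hence $2^k+1\equiv s^{-1}\equiv 1 \pmod r$, so $r\mid 2^k$, which forces $r=1$ as $r$ is odd. Thus $r_1=1$ and $N=2^m$. Moreover $\gcd(d,2^n-1)=1$, since $d\equiv 1\pmod{2^m-1}$ and $d\equiv 1-2s\pmod{2^m+1}$. Here $1-2s\equiv s(2^k+1-2)=s(2^k-1)$, and $\gcd(2^k-1,2^m+1)$ may be nontrivial. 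I will not need this, because the general $\aleph_F$ formula in Theorem~\ref{thm.diff-Wal-four} is used.

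\emph{Value set of $W_F(a,b)$.} By Lemma~\ref{lem.WF}, $W_F(a,b)=(N(a,b)-1)2^m$, where $N(a,b)$ counts $u\in\mu_{2^m+1}$ with $au^s+\overline a u^{1-s}+\overline b u+b=0$. Substitute $u=v^{2^k+1}$. This is a bijection of $\mu_{2^m+1}$ because $\gcd(2^k+1,2^m+1)=1$. Then $u^s=v$ and $u^{1-s}=v^{2^k}$, so the equation becomes
\[
\overline b\, v^{2^k+1}+\overline a\, v^{2^k}+a v+b=0 .
\]
If $b\neq0$, this is of the form in Lemma~\ref{lem.eq}, so the number of roots in $\mu_{2^m+1}$ is $0$, $1$, $2$, or $2^{r_1'}+1$ with $r_1'=\gcd(k,2m,m)=e$. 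If $b=0$, the equation reduces to $\overline a v^{2^k}=av$, i.e.\ $v^{2^k-1}=a/\overline a$. The number of solutions in $\mu_{2^m+1}$ is $0$ or $\gcd(2^k-1,2^m+1)$, which by Lemma~\ref{lem:gcd} is $1$ or $2^e+1$. So $N(a,b)-1\in\{-1,0,1,2^e\}$ and $W_F\in\{-2^m,0,2^m,2^{m+e}\}$.

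\emph{Exactly four values, and $j=e$.} I need the value $2^{m+e}$ to be attained with $e\ge1$. Lemma~\ref{lem-Wal4} already forces $\#\{W=2^{m+j}\}=\frac{2^{m-j}(2^n-1)(N-1)}{2^{2j}-1}>0$ once the value set has at most four values. The subtle point is confirming that the fourth value really equals $2^{m+e}$, i.e.\ $j=e$, and that the spectrum is not three-valued. Three-valued would contradict $N_3$, since the moment identities with only $\{-2^m,0,2^m\}$ force $N=1$. So the fourth value exists, and it must be $2^{m+e}$ because that is the only candidate above. The condition $2j\mid m$ is then guaranteed (when $e<m$) by the structure, though only the formula is needed.

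\emph{Final substitution.} With $N=2^m$ and $j=e$, we get $N^2-2^n-2^{m+e}(2^m-1)=-2^{m+e}(2^m-1)$ and $2^n-N=2^m(2^m-1)$. Plugging into Corollary~\ref{cor.diff-Wal} gives $\omega_{t_1}=\frac{2^m(2^m-1)(t_2-2^e)}{t_1(t_2-t_1)}$, and similarly for the other two formulas, which match the statement.

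The main obstacle is the third step: rigorously excluding a three-valued spectrum and pinning $j=e$. If the moment argument is delicate, I would instead exhibit explicit $(a,b)$ attaining $2^e+1$ roots, for instance $b=0$ in the case $\gcd(2^k-1,2^m+1)=2^e+1$, or the $b\neq0$ construction from Lemma~\ref{lem.eq} otherwise.
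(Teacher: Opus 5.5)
Your proposal is correct and follows the paper's proof: $r_0=r_1=1$ gives $N=2^m$; the substitution $u=v^{2^k+1}$ with Lemma~\ref{lem.eq} (for $b\neq0$) and the count $\gcd(2^k-1,2^m+1)\in\{1,2^e+1\}$ (for $b=0$) confine $W_F(a,b)$ to $\{-2^m,0,2^m,2^{m+e}\}$; and Corollary~\ref{cor.diff-Wal} with $j=e$ finishes. Your extra moment check that $2^{m+e}$ is actually attained is a step the paper skips, and it is not really needed, since $\aleph_F=2^n+2^{m+e}(N-1)$ follows from the moment identities whenever the values lie in that set. Your side remarks that $\gcd(d,2^n-1)=1$ and that $2e\mid m$ is automatic both fail in general (e.g.\ $m=3$, $k=2$, $d=15$), but you rightly never use them.
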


\begin{proof}
We first determine the possible values of the exponential sum $W_F(a,b)$ for $(a,b)\in\F_{2^{n}}\times\F_{2^{n}}\backslash\{(0,\,0)\}$. By Lemma \ref{lem.WF},
\[W_F(a,b)=(N(a,b)-1)2^m,\]
where $N(a,b)$ is the number of elements $u\in\mu_{2^m+1}$ satisfying
\[a u^{s}+\overline{a}u^{1-s}+\overline{b}u+b=0.\]
Since $\gcd(2^k+1,2^m+1)=1$, replacing $u$ by $u^{2^k+1}$ yields
\begin{equation}\label{eq-DS}
\overline b u^{2^k+1}+\overline a u^{2^k}+a u+ b=0.
\end{equation}
When $b=0$, equation \eqref{eq-DS} reduces to $\overline a u^{2^k}+a u=0$, which has either \(\gcd\left(2^k-1,2^m+1\right)\) solutions or
no solution in $\mu_{2^m+1}$. By Lemma \ref{lem:gcd}, $\gcd(2^k-1,2^m+1)=1$ or $2^e+1$. 
When $b\ne 0$, Lemma \ref{lem.eq} implies that equation \eqref{eq-DS} has $0,1,2$ or $2^{e}+1$ solutions in $\mu_{2^m+1}$. Combining the two cases above, it follows that
$W_F(a,b)\in\{-2^m, 0, 2^m, 2^{e+m}\}$ for $(a,b)\in\F_{2^{n}}\times\F_{2^{n}}\backslash\{(0,\,0)\}$, corresponding to $j=e$ in Corollary \ref{cor.diff-Wal}.
It remains to determine the value of $N$ in Corollary \ref{cor.diff-Wal}.
Since
$s=\left(2^k+1\right)^{-1}$ and
$\gcd\left(2^k+1,2^m+1\right)=1$, we have
$\gcd(s,2^m+1)=\gcd(s-1,2^m+1)=1.$ Hence, according to the definition of \(N\) in Corollary \ref{cor.diff-Wal}, we have \(N=2^m\).
Substituting $j=e$ and $N=2^m$ into
Corollary \ref{cor.diff-Wal} gives the desired values of
$\omega_0$, $\omega_{t_1}$, and $\omega_{t_2}$.
\end{proof}

Theorem \ref{thm.diff-4} reduces the determination of the differential
spectrum to characterizing the possible values of \({\rm DDT}_F(1,b)\) for \(b\in\F_{2^n}\setminus\{1\}\).
To proceed further, we invoke the following lemma from \cite{Xie-locally}, which gives an explicit expression for \({\rm DDT}_F(1,b)\).

\begin{lem}[Lemma 5, \cite{Xie-locally}]\label{lem.1}
Let $F(x)=x^{s(2^m-1)+1}$ be a power function over $\fn$, where \(s=\left(2^k+1\right)^{-1}\), $m$ and $k$ are positive integers with $\gcd\left(2^k+1,\,2^m+1\right)=1$.
Then for $b\in\fn\backslash\{1\}$, ${\rm DDT}_F(1,\,b)=\#\Phi$, where
\[\Phi=\left\{y\in\Omega: \left(b+\bm^{2^k}\right)y^{2^{2k}+1}+\left(\bm^{2^k+1}+1\right) y^{2^{2k}}+ \left(b^{2^k+1}+1\right)y+b^{2^k}+\bm= 0\right\}\]
and
\[\Omega=\left\{y\in\mu_{2^m+1}: y \ne 1, y^{2^k}+b\ne0, {\rm \;and}\; y\ne \frac{\bm y^{2^k}+ 1}{y^{2^k}+b} \right\}.\]
\end{lem}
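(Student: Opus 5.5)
The plan is to rewrite the differential equation $(x+1)^d+x^d=b$, with $d=s(2^m-1)+1$ and $q=2^m$, in terms of elements of the unit circle, and then eliminate one of the two circle variables. First the trivial solutions. If $x\in\F_q$, then $x^{q-1}=1$ for $x\neq0$, so $x^d=x$ for every $x\in\F_q$ (including $x=0$). The left-hand side is then $(x+1)+x=1$. Since $b\neq1$, every solution satisfies $x\notin\F_q$.

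\textbf{Substitution.} For $x\notin\F_q$ put $\alpha=x^{q-1}$ and $\beta=(x+1)^{q-1}$. Both lie in $\mu_{q+1}\setminus\{1\}$, and $\alpha\neq\beta$. From $\overline{x}=\alpha x$ and $\overline{x}+1=\beta(x+1)$ we get
\[x=\frac{\beta+1}{\alpha+\beta},\qquad x+1=\frac{\alpha+1}{\alpha+\beta}.\]
Conversely, any pair $\alpha\neq\beta$ in $\mu_{q+1}\setminus\{1\}$ gives such an $x\notin\F_q$. One checks that $\overline{x}=\alpha x$ holds, using $\overline{\alpha}=\alpha^{-1}$ and $\overline{\beta}=\beta^{-1}$.

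Since $\gcd(2^k+1,q+1)=1$, raising to the power $2^k+1$ permutes $\mu_{q+1}$. Hence we may write $\alpha=z^{2^k+1}$ and $\beta=y^{2^k+1}$ with unique $z,y\in\mu_{q+1}\setminus\{1\}$, $z\neq y$. Then $\alpha^s=z$ and $\beta^s=y$, so the equation becomes
\[ y\left(z^{2^k+1}+1\right)+z\left(y^{2^k+1}+1\right)=b\left(z^{2^k+1}+y^{2^k+1}\right). \tag{$\ast$}\]
Thus ${\rm DDT}_F(1,b)$ equals the number of pairs $(y,z)\in(\mu_{q+1}\setminus\{1\})^2$ with $y\neq z$ satisfying $(\ast)$.

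\textbf{Elimination of $z$.} Next I would conjugate $(\ast)$, replace $\overline{y},\overline{z}$ by $y^{-1},z^{-1}$, and clear denominators. This gives a second polynomial relation between $y$ and $z$. The key step is to combine $(\ast)$ with its conjugate so that the terms of degree $2^k+1$ in $z$ cancel. The result should be a relation that is \emph{linear} in $z$:
\[ z\left(y^{2^k}+b\right)=\overline{b}\,y^{2^k}+1 .\]
If $y^{2^k}+b\neq0$, this determines $z=\frac{\overline{b}y^{2^k}+1}{y^{2^k}+b}$. This $z$ automatically lies on $\mu_{q+1}$, because numerator and denominator are conjugate up to the factor $y^{2^k}$. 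The conditions $z\neq1$ and $z\neq y$ then become the defining conditions of $\Omega$, up to simple identities.

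Substituting this $z$ back into $(\ast)$ and multiplying by $(y^{2^k}+b)^{2^k+1}$ should produce, after simplification and division by a common nonzero factor, the equation of degree $2^{2k}+1$ defining $\Phi$. Its coefficients should come out as $b+\overline{b}^{2^k}$, $\overline{b}^{2^k+1}+1$, $b^{2^k+1}+1$ and $b^{2^k}+\overline{b}$.

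\textbf{Bijection and expected obstacle.} Finally I would show that $x\mapsto y$ is a bijection from the solution set onto $\Phi$. Injectivity holds because $y$ determines $z$, hence $\alpha$ and $\beta$, hence $x$. Surjectivity holds because every $y\in\Phi$ gives a valid pair $(y,z)$ satisfying $(\ast)$.

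The main obstacle I expect is the elimination step. It requires finding the right combination of $(\ast)$ and its conjugate, and then handling the degenerate cases: $y^{2^k}+b=0$, and the possibility that the linear relation collapses (for instance when $b\in\mu_{q+1}$). In those cases I would need to check directly that $(\ast)$ has no solutions with $y^{2^k}=b$, which justifies excluding them in $\Omega$. I would also need to verify that the factor divided out in the last simplification never vanishes on $\Omega$.
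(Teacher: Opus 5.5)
The paper quotes this lemma from Xie et al.\ without proof, so your argument has to stand on its own. Everything up to $(\ast)$ is correct. The gap is the elimination step, which you only predict, and it fails as you describe it. Cancel the $z^{2^k+1}$ terms between $(\ast)$ and its conjugate, then divide by $y^{2^k+1}+1\neq 0$. This gives $R_1+R_2=0$, where
\[
R_1=(y+b)z^{2^k}+\overline{b}\,y+1,\qquad R_2=\left(y^{2^k}+\overline{b}\right)z+b\,y^{2^k}+1.
\]
That relation involves both $z$ and $z^{2^k}$, so it is not linear in $z$. No polynomial combination can do better. The pair $z=y$ satisfies $(\ast)$ and its conjugate identically, so any consequence of the form $A(y)z+B(y)=0$ obtained this way is a multiple of $z+y$. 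The missing idea is to use $z\neq y$ to divide out this spurious factor. Concretely, $(\ast)$ is exactly $zR_1+yR_2=0$. Together with $R_1=R_2$ this gives $(y+z)R_2=0$, hence $R_1=R_2=0$. A more transparent route: after your substitution the equation is linear in $x$, namely $x(y+z)=b+y$, so $x+1=(b+z)/(y+z)$. Imposing $\overline{x+1}=y^{2^k+1}(x+1)$ gives $R_2=0$, and imposing $\overline{x}=z^{2^k+1}x$ gives $R_1=0$.

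Note also that $R_2=0$ reads $z(y^{2^k}+\overline{b})=by^{2^k}+1$. With your normalization $y^{2^k+1}=(x+1)^{q-1}$, the roles of $b$ and $\overline{b}$ are the reverse of what you predicted. Substituting this $z$ into $R_1$ gives $(y^{2^k}+\overline{b})^{2^k}R_1=\widetilde T_1(y)$. Here $\widetilde T_1$ is $T_1$ with $b$ and $\overline{b}$ swapped, and $\widetilde T_1(y)=y^{2^{2k}+1}T_1(y^{-1})$. So your map $x\mapsto y$ is a bijection onto $\{\overline{y}: y\in\Phi\}$, not onto $\Phi$, and these sets differ in general when $b\notin\F_q$. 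Only the cardinality is claimed, so the fix is to use $x\mapsto\overline{y}=y^{-1}$ instead; this map carries the three conditions defining $\Omega$ to their conjugates.

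With these corrections the degenerate cases close as you anticipated:
\begin{itemize}
\item If $y^{2^k}=\overline{b}$, then $b=y^{-2^k}$, and $R_1=0$ forces $z=y$, which is excluded.
\item In the converse direction, $x=(b+y)/(y+z)$ is neither $0$ nor $1$. Indeed, $b=y$ together with $R_2=0$, or $b=z$ together with $R_1=0$, would force $z=y$.
\item The factor removed after substituting into $(\ast)$ is $by^{2^k}+1$. It is nonzero because $y^{2^k}\neq\overline{b}$.
\end{itemize}
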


For subsequent analysis, we employ the following equivalent
characterization of \({\rm DDT}_F(1,b)\).

\begin{lem}\label{lem.b2}
Let $m$ and $k$ be positive integers with $\gcd\left(2^k+1,\,2^m+1\right)=1$. Let $b\in\fn\backslash\{1\}$ and define
\begin{eqnarray}
&&T_1(y)=\left(b+\bm^{2^k}\right)y^{2^{2k}+1}+\left(\bm^{2^k+1}+1\right) y^{2^{2k}}+ \left(b^{2^k+1}+1\right)y+b^{2^k}+\bm, \label{eq.T1}\\
  &&T_2(y)=y^{2^k+1}+\overline b y^{2^k}+by+1. \label{eq.T2}
\end{eqnarray}
For $i=1,2$, let
$N_i=\left\{y\in\mu_{2^m+1}: T_i(y)=0\right\}$.
Then ${\rm DDT}_F(1,\,b)=\#N_1-\#N_2$.
\end{lem}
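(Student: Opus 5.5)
The plan is to start from Lemma \ref{lem.1}, which gives ${\rm DDT}_F(1,b)=\#\Phi$ with $\Phi=\{y\in\Omega: T_1(y)=0\}$. Since $\Omega$ is $\mu_{2^m+1}$ with three kinds of points removed, we have $\#\Phi=\#N_1-\#(N_1\setminus\Omega)$. So the statement reduces to proving the set identity
\[
N_1\setminus\Omega \;=\; N_2 .
\]
That is, a root $y\in\mu_{2^m+1}$ of $T_1$ is excluded from $\Omega$ exactly when $T_2(y)=0$, and every root of $T_2$ in $\mu_{2^m+1}$ is a root of $T_1$.

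\textbf{The third exclusion is $T_2$.} Write $z=y^{2^k}$. Over characteristic $2$ with $z\neq b$, the condition $y=\frac{\bm z+1}{z+b}$ is equivalent to $yz+by+\bm z+1=0$, i.e. $T_2(y)=0$. So this exclusion is literally the vanishing of $T_2$, and the remaining task is to handle the other two exclusions and to show $N_2\subseteq N_1$.

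\textbf{Key identity $N_2\subseteq N_1$.} I expect $T_1$ to be obtained from $T_2$ by iterating the Möbius map $y\mapsto \frac{\bm y^{2^k}+1}{y^{2^k}+b}$. In \cite{Xie-locally} the differential equation is transformed by exactly such a substitution, which is why $\deg T_1=2^{2k}+1$. I would therefore look for polynomials $A,B$ with
\[
T_1(y)=A(y)\,T_2(y)^{2^k}+B(y)\,T_2(y).
\]
The first try is to take $A$ linear in $y$ (coefficients in $b,\bm$) and $B(y)$ a combination of $y^{2^{2k}}$ and $1$. These would be fixed by comparing the coefficients of $y^{2^{2k}+2^k+1}$, $y^{2^{2k}+1}$, $y^{2^{2k}}$, $y$ and the constant term, after expanding
\[
T_2(y)^{2^k}=y^{2^{2k}+2^k}+\bm^{2^k}y^{2^{2k}}+b^{2^k}y^{2^k}+1 .
\]
If no polynomial identity exists, I would fall back on a pointwise argument. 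For $y\in\mu_{2^m+1}$ with $T_2(y)=0$, solve $y=\frac{\bm z+1}{z+b}$, raise to the $2^k$-th power, and use $\overline y=y^{-1}$ (conjugating $T_2(y)=0$) to substitute into $T_1$ and check that it vanishes. Either way this gives $N_2\subseteq N_1$.

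\textbf{The exclusions $y=1$ and $y^{2^k}=b$.} For $y^{2^k}=b$ we need $b\in\mu_{2^m+1}$, so $\bm=b^{-1}$. Then $T_2(y)=by+\bm b+by+1=0$ automatically, so such a $y$ already lies in $N_2$. I would check by direct substitution that it also lies in $N_1$. For $y=1$ we have $T_2(1)=b+\bm$, and $T_1(1)$ is $b+\bm$ plus its $2^k$-th power plus $b^{2^k+1}+\bm^{2^k+1}$. I would show that $y=1$ is a root of $T_1$ exactly when it is a root of $T_2$, i.e. when $b\in\F_{2^m}$. This should follow from the same identity, or from factoring $T_1(1)$ in terms of $c=b+\bm$, possibly using $\gcd(2^k+1,2^m+1)=1$ if a spurious factor appears. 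Combining the three cases gives $N_1\setminus\Omega=N_2$, hence ${\rm DDT}_F(1,b)=\#N_1-\#N_2$.

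\textbf{Main obstacle.} The hardest step is pinning down the exact algebraic relation between $T_1$ and $T_2$. This covers $N_2\subseteq N_1$ and, in particular, the boundary case $y=1$, where the correspondence may require the hypothesis $\gcd(2^k+1,2^m+1)=1$ rather than holding identically.
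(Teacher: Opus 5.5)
Your plan is correct and is essentially the paper's proof. Your ansatz for $A,B$ lands exactly on the paper's identity $T_1(y)=(y+\overline{b})\,T_2(y)^{2^k}+(y^{2^k}+b)^{2^k}\,T_2(y)$, and your direct evaluation $T_2(y)=b\overline{b}+1=0$ when $y^{2^k}=b$ (because then $b\in\mu_{2^m+1}$) is shorter than the paper's handling of that case, which goes through the identity and a parity argument on $m/e,k/e$ to show $\overline{b}^{2^k}+b\neq 0$.

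The step you left open, $y=1$, is closed as in the paper: $T_1(1)=(b+1)^{2^k+1}+(\overline{b}+1)^{2^k+1}$, and $(\overline{b}+1)/(b+1)\in\mu_{2^m+1}$. So $\gcd(2^k+1,2^m+1)=1$ forces $\overline{b}=b$, i.e.\ $T_2(1)=0$; this is the only place the hypothesis is used.
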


\begin{proof}
By Lemma \ref{lem.1}, we have ${\rm DDT}_F(1,\,b)=\#\Phi$ for any $b\in\fn\backslash\{1\}$, where the set \(\Phi\) is defined as in Lemma \ref{lem.1}.
From the definition of $T_1(y)$ and $T_2(y)$, it follows that $\Phi$ consists of all elements $y\in\mu_{2^m+1}$ such that
\[T_1(y)=0,\,\,T_2(y)\ne0,\,\,y\ne1,\,\,y^{2^k}+b \ne 0.\]
Denote $N_i=\{y\in\mu_{2^m+1}: T_i(y)=0\}$ for $i=1,2$.
Equivalently,
\begin{equation}\label{eq-N1N2}
\Phi=N_1\setminus(C\cup N_2)=(N_1\setminus C)\cap (N_1\setminus N_2),
\end{equation}
where $C=\{1\}\cup\left\{y\in\mu_{2^m+1}: y^{2^k}+b= 0\right\}$.
A direct computation shows that
\begin{equation}\label{eq:key-identity}
T_1(y)=\left(y+\overline b\right)T_2(y)^{2^k}+\left(y^{2^k}+b\right)^{2^k}T_2(y).
\end{equation}
This identity implies that every solution of $T_2(y)=0$ in $\mu_{2^m+1}$ is also a solution of $T_1(y)=0$. Hence $N_2\subseteq N_1$.
We now consider the possible intersections of \(N_1\) and \(C\).

\textbf{Case 1:} \(N_1\cap C=\emptyset\).
In this case, $N_1\setminus C=N_1$ and \eqref{eq-N1N2} yields $\Phi=N_1\setminus N_2$.

\textbf{Case 2:} $N_1\cap C=\{1\}$. Then $y=1$ is a solution of $T_1(y)=0$. Substituting \(y=1\) into \(T_1(y)=0\) gives
\[\left(b+\bm^{2^k}\right)+\left(\bm^{2^k+1}+1\right)+ \left(b^{2^k+1}+1\right)+b^{2^k}+\bm=0,\]
which is equivalent to
\[\left(\bm+1\right)^{2^k+1}=(b+1)^{2^k+1}.\]
Since $b\ne1$, we obtain
$\left(\frac{\bm+1}{b+1}\right)^{2^k+1}=1.$
Note that $\frac{\bm+1}{b+1}\in\mu_{2^m+1}$. Together with the assumption $\gcd\left(2^k+1,\,2^m+1\right)=1$, this implies $\frac{\bm+1}{b+1}=1$, that is, $\bm=b$.
Consequently, we have
$$T_2(1)=1+\overline b+b+1=0,$$
which shows $1\in N_2$. Therefore, by \eqref{eq-N1N2},
$\Phi=N_1\setminus(\{1\}\cup N_2)=N_1\setminus N_2$.

\textbf{Case 3:} $N_1\cap C=\left\{y\in\mu_{2^m+1}: y^{2^k}+b= 0\right\}$. If this case happens, $y=b^{2^{-k}}$ is a root of $T_1(y)=0$. Substituting $y=b^{2^{-k}}$ into \eqref{eq:key-identity} gives \begin{equation}\label{eq-T2}
\left(b^{2^{-k}}+\overline b\right)T_2\left(b^{2^{-k}}\right)^{2^k}=0.
\end{equation}
Note that \(\overline b^{2^k}+b\neq 0\). Otherwise $b+\bm^{2^k}= 0$ implies $b\in\fn\cap\mathbb{F}_{2^{m+k}}$. Since $\gcd\left(2^k+1,\,2^m+1\right)=1$, the integers $m/e$ and $k/e$ have different parity.
On the other hand, $y=b^{2^{-k}}$ together with $y\in\mu_{2^m+1}$ implies
$b\in\mu_{2^m+1}$.
Consequently,
\[
b\in \mu_{2^m+1}\cap\mathbb{F}_{2^e}=\{1\},
\]
so that $b=1$, a contradiction.
Thus $\overline b^{2^k}+b\neq 0$.
It then follows from \eqref{eq-T2} that $T_2(b^{2^{-k}})=0$. Hence
\[
\left\{y\in\mu_{2^m+1}: y^{2^k}+b=0\right\}\subseteq N_2.
\]
Moreover, by \eqref{eq-N1N2}, we obtain
\[
\Phi
= N_1\setminus\left(\left\{y\in\mu_{2^m+1}: y^{2^k}+b=0\right\}\cup N_2\right)
= N_1\setminus N_2.
\]

Combining all cases and using $N_2\subseteq N_1$, we conclude that
\[
\#\Phi=\#(N_1\setminus N_2)=\#N_1-\#N_2.
\]
The desired result then follows. This completes the proof.
\end{proof}

By Lemma \ref{lem:gcd}, the analysis is
split into two cases according to the value of
\(\gcd(2^k-1,2^m+1)\).

\begin{thm}\label{thm.DS2}
Assume that $n=2m$, where $k$ and $m$ are positive integers satisfying $\gcd(2^k+1,2^m+1)=1$. Set
$e=\gcd(k,m)$ and $s=\left(2^k+1\right)^{-1}$.
If $\gcd\left(2^k-1,\,2^m+1\right)=2^e+1$, then the power function $F(x)=x^{s(2^m-1)+1}$ over $\F_{2^n}$ is locally differentially $2^e$-uniform and its differential spectrum is given by
$$\mathbb{DS}_F=\left\{\omega_0=2^{n}-2^{n-e}+2^{m-e}-1,\,\omega_{2^e}=2^{n-e}-2^{m-e},\,\omega_{2^m}=1\right\}.$$
\end{thm}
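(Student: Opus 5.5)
The plan is to prove that ${\rm DDT}_F(1,b)\in\{0,2^e\}$ for every $b\in\F_{2^n}\setminus\{1\}$, and then to obtain the spectrum by counting.

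\textbf{Parameters.} By Lemma \ref{lem:gcd}, the hypothesis $\gcd(2^k-1,2^m+1)=2^e+1$ means $m/e$ is even or $k/e$ is even. If $m/e$ were even, then $k/e$ would be odd, and $\gcd(2^k+1,2^m+1)=1$ would fail only if both quotients were odd, which is not the case; so we must check the other constraint instead. I expect to conclude that $k/e$ is even and $m/e$ is odd. Then $\gcd(2k,2m)=2e$ and $r_1=\gcd(2e,m)=e$. Likewise, for $T_2$ we have $\gcd(k,2m)=2e$ and again $r_1=e$. Hence, by Lemma \ref{lem.eq}, each of $\#N_1$ and $\#N_2$ lies in $\{0,1,2,2^e+1\}$. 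By Lemma \ref{lem.b2}, ${\rm DDT}_F(1,b)=\#N_1-\#N_2$ with $N_2\subseteq N_1$.

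\textbf{Dynamical interpretation.} Set $\varphi(y)=\frac{\bm y^{2^k}+1}{y^{2^k}+b}$. This is the map appearing in $\Omega$, and it sends $\mu_{2^m+1}$ to itself: conjugation gives $\overline{\varphi(y)}=1/\varphi(y)$ for $y\in\mu_{2^m+1}$. It is also injective there, since it is a M\"obius transformation, with determinant $b\bm+1$, composed with a Frobenius map.

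Observe that $T_2(y)=y\,(y^{2^k}+b)+(\bm y^{2^k}+1)$. So, away from the pole, $T_2(y)=0$ if and only if $\varphi(y)=y$. I then plan to verify, using the identity \eqref{eq:key-identity}, that $T_1(y)=0$ if and only if $\varphi(\varphi(y))=y$. This amounts to clearing denominators in $\varphi^{(2)}(y)=y$ and recognizing $T_1$. The exceptional points $y=1$ and $y^{2^k}=b$ were already shown in the proof of Lemma \ref{lem.b2} to be harmless.

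\textbf{Parity.} It follows that $N_1\setminus N_2$ is the set of points of exact period $2$ of the injective map $\varphi$ on $\mu_{2^m+1}$. These points split into orbits $\{y,\varphi(y)\}$ of size two, so $\#N_1-\#N_2$ is even. Combined with $\#N_1,\#N_2\in\{0,1,2,2^e+1\}$ and $N_2\subseteq N_1$, the only possible differences are $0$, $2$ (from $(\#N_1,\#N_2)=(2,0)$) and $2^e$ (from $(2^e+1,1)$). Hence ${\rm DDT}_F(1,b)\in\{0,2,2^e\}$.

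\textbf{Removing the value $2$.} If $e=1$ there is nothing to remove, since $2=2^e$. For $e\ge 2$, apply Theorem \ref{thm.diff-4} with $t_1=2$ and $t_2=2^e$. This gives $\omega_{2}=\frac{2^m(2^m-1)(t_2-2^e)}{2(2^e-2)}=0$, so the value $2$ never occurs. The step I expect to be the main obstacle is checking that Theorem \ref{thm.diff-4} applies here, in particular that $W_F$ takes exactly four values; if $2^{m+e}$ never occurs, the formula must be rederived from the three-valued case. Alternatively, I would first try to exclude $(\#N_1,\#N_2)=(2,0)$ directly via the parametrization in Lemma \ref{lem.eq}.

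\textbf{Counting.} With ${\rm DDT}_F(1,b)\in\{0,2^e\}$ for $b\ne1$, and ${\rm DDT}_F(1,1)=2^m$ by Lemma \ref{lem.eq-b=1} (since $r_0=r_1=1$), the identity $\sum_b {\rm DDT}_F(1,b)=2^n$ gives $2^e\omega_{2^e}=2^n-2^m$. Hence $\omega_{2^e}=2^{n-e}-2^{m-e}$, and then $\omega_0=2^n-1-\omega_{2^e}=2^n-2^{n-e}+2^{m-e}-1$.

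Finally, I will check that $\omega_{2^e}>0$. Then $\max\{{\rm DDT}_F(1,b):b\notin\F_2\}=2^e$, which is the claimed local uniformity.
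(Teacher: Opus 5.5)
Your outline follows the paper's proof: ${\rm DDT}_F(1,b)=\#N_1-\#N_2$ with $N_2\subseteq N_1$ (Lemma \ref{lem.b2}), root counts from Lemma \ref{lem.eq}, parity, then Theorem \ref{thm.diff-4} with $t_1=2$, $t_2=2^e$ to force $\omega_2=0$, then counting.

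Your worry about Theorem \ref{thm.diff-4} is unnecessary. Its hypotheses are exactly yours, and $2^{m+e}$ really is a Walsh value: for $a=1$, $b=0$, equation \eqref{eq-DS} reads $u^{2^k-1}=1$, which has $2^e+1$ solutions in $\mu_{2^m+1}$.

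\textbf{The gap.} You apply Lemma \ref{lem.eq} to $T_1$ for every $b$. That lemma concerns polynomials $x^{2^r+1}+\alpha x^{2^r}+\beta x+\gamma$, so it needs the coefficient $b+\overline{b}^{2^k}$ of $y^{2^{2k}+1}$ in $T_1$ to be nonzero. When $b+\overline{b}^{2^k}=0$ (which includes $b=0$), the top term disappears and Lemma \ref{lem.eq} says nothing. The paper treats this as a separate case, and you must too. The fix runs as follows.
\begin{itemize}
\item From $b=\overline{b}^{2^k}=b^{2^{m+k}}$ and $\gcd(m+k,2m)=e$ (as $(m+k)/e$ is odd and $\gcd(k,m)=e$), you get $b\in\F_{2^e}$.
\item Hence $\overline{b}=b=b^{2^k}$, and $T_1(y)=(b^2+1)\bigl(y^{2^{2k}}+y\bigr)$ with $b\neq1$.
\item Its roots in $\mu_{2^m+1}$ are the solutions of $y^{2^{2k}-1}=1$. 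By Lemma \ref{lem:gcd} there are $\gcd(2^{2k}-1,2^m+1)=2^e+1$ of them, since $\gcd(2k,m)=e$ and $2k/e$ is even.
\end{itemize}
So $\#N_1=2^e+1\in\{0,1,2,2^e+1\}$ after all, and the rest of your argument goes through; but this computation is needed.

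\textbf{Smaller points.}
\begin{itemize}
\item By Lemma \ref{lem:gcd}, $\gcd(2^k-1,2^m+1)=2^e+1$ holds exactly when $k/e$ is even, and then $m/e$ is odd because $\gcd(k/e,m/e)=1$. Your ``$m/e$ even or $k/e$ even'' is wrong, though you reach the right conclusion.
\item Your identification of $T_1(y)=0$ with $\varphi(\varphi(y))=y$ is correct. However, $\varphi$ is not injective when $b\overline{b}=1$: the M\"obius determinant vanishes and $\varphi\equiv\overline{b}$ off its pole. The fixed-point-free involution $y\mapsto\varphi(y)$ on $N_1\setminus N_2$ does not need injectivity anyway. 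The paper's route is simpler still: ${\rm DDT}_F(1,b)$ is always even because solutions pair up as $x,x+1$.
\item At the end you need $\omega_{2^e}\geq2$, not $\omega_{2^e}>0$. The point $b=0$ already has ${\rm DDT}_F(1,0)=2^e$, since $\#N_1=2^e+1$ and $T_2(y)=y^{2^k+1}+1$ has only the root $1$ in $\mu_{2^m+1}$. The bound $2^{m-e}(2^m-1)\geq2$ holds once $m\geq2$.
\item Treating $e=1$ separately is right. There the paper's substitution $t_1=t_2=2$ into Theorem \ref{thm.diff-4} is degenerate.
\end{itemize}
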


\begin{proof}
By Lemma \ref{lem.b2}, for any $b\in\F_{2^n}\backslash\{1\}$,
${\rm DDT}_F(1,b)=\#N_1-\#N_2$, where
$N_i=\{y\in\mu_{2^m+1}: T_i(y)=0\}$ for $i=1,2$, with $T_1(y)$ and $T_2(y)$ defined in \eqref{eq.T1} and \eqref{eq.T2}, respectively.
We distinguish two cases according to whether $b+\overline b^{2^k}=0$.

{\bf Case 1}: $b+\bm^{2^k}= 0$. 
In this case, the equation $T_1(y)=0$ is equivalent to
\begin{equation}\label{eq-case1}
\left(\bm^{2^k+1}+1\right) y^{2^{2k}}+ \left(b^{2^k+1}+1\right)y=0.
\end{equation} 
Since \(\gcd\left(2^k-1,2^m+1\right)=2^e+1\),
Lemma \ref{lem:gcd} implies that \(k/e\) is even and \(m/e\) is odd. Hence, \(\gcd\left(2^k+1,2^n-1\right)=1\), and consequently,
\(b^{2^k+1}+1\neq0\) for \(b\neq1\).
Therefore, the equation \eqref{eq-case1} becomes $y^{2^{2k}-1}=1,$
which has $2^e+1$ solutions in $\mu_{2^m+1}$ since $\gcd\left(2^{2k}-1,\,2^m+1\right)=2^e+1$ due to \(m/e\) odd, which implies $\#N_1=2^e+1$.
Furthermore, by Lemma \ref{lem.eq}, the equation $T_2(y)=0$
has $0,\,1,\,2$ or $2^{\gcd(k,\,m)}+1=2^e+1$ roots in $\mu_{2^m+1}$.
Combining this with the fact that ${\rm DDT}_F(1,\,b)$ is even, we deduce that
$${\rm DDT}_F(1,\,b)=\#N_1-\#N_2\in\{0,\,2^e\}.$$

{\bf Case 2}: $b+\bm^{2^k}\ne 0$. If this case happens, by Lemma \ref{lem.eq}, the equation
$T_1(y)=0$ has $0,\,1,\,2$ or $2^{\gcd(2k,\,m)}+1=2^{e}+1$ roots in $\mu_{2^m+1}$; while the equation $T_2(y)=0$
has $0,\,1,\,2$ or $2^{\gcd(k,\,m)}+1=2^e+1$ roots in $\mu_{2^m+1}$.
Therefore $$\#N_1-\#N_2\in\left\{0,\,1,\,2,\,
2^{e}-1,\,2^{e},\,2^{e}+1\right\}.$$
Since ${\rm DDT}_F(1,\,b)$ is even, it follows that
$${\rm DDT}_F(1,\,b)\in\{0,\,2,\,2^e\}.$$

Combining these two cases, we conclude that ${\rm DDT}_F(1,\,b)\in\{0,\,2,\,2^e\}$ for $b\in\F_{2^n}\backslash\{1\}$. Substituting $t_1=2$ and $t_2=2^e$ into Theorem \ref{thm.diff-4}, we obtain that
$$\mathbb{DS}_F=\{\omega_0,\,\omega_{2},\,\omega_{2^e},\,\omega_{2^m}=1\}$$
and
\[\omega_0=2^n-2^{n-e}+2^{m-e}-1,\,\omega_{2}=0,\,\omega_{2^e}=2^{n-e}-2^{m-e}.\]
Since $\omega_{2^e}=2^{m-e}(2^m-1)>1$, it follows that $F(x)$ is locally differentially $2^e$-uniform. This completes the proof.
\end{proof}

\begin{remark}
In the special case $e=2$, Theorem \ref{thm.DS2} yields an infinite
family of locally differentially $4$-uniform Niho type power functions.
Furthermore, computational experiments for $1\leq m\leq 10$
show that, apart from the previously known power functions and their
equivalent classes described in Lemma \ref{lem-eqv}, all observed locally differentially $4$-uniform Niho type power functions are contained in the family
constructed in Theorem \ref{thm.DS2}.
\end{remark}

\begin{example}\label{ex1}
Let $n=2m$, $1\leq m\leq 10$, and let $k$ satisfy $\gcd\left(2^k+1, 2^m+1\right)=1$, $\gcd(k, m)=2$, and $\gcd\left(2^k-1, 2^m+1\right)=5$. Then $F(x)=x^{\left(2^k+1\right)^{-1}(2^m-1)+1}$ is locally differentially $4$-uniform and its differential spectrum is given by Table \ref{tab-Diff1}, which is consistent with Theorem \ref{thm.DS2}.
\begin{table}[ht]
\caption{The differential spectrum of $F(x)$ in Example \ref{ex1}}\label{tab-Diff1}
\label{tab-Diff1}
\centering
\begin{tabular}{lll}
\hline
$n$ & $k$ & $\mathbb{DS}_F$\\
\hline
12 &
$4,8$ &
$\{\omega_0=3087,\ \omega_4=1008,\ \omega_{64}=1\}$\\
20 &
$4,8,12,16$ &
$\{\omega_0=786687,\ \omega_4=261888,\ \omega_{1024}=1\}$\\
\hline
\end{tabular}
\end{table}
\end{example}

\begin{thm}\label{thm.DS1}
Assume that $n=2m$, where $k$ and $m$ are positive integers satisfying $\gcd(2^k+1,2^m+1)=1$. Set
$e=\gcd(k,m)$ and $s=\left(2^k+1\right)^{-1}$. If $\gcd\left(2^k-1,\,2^m+1\right)=1$, then the power function $F(x)=x^{s(2^m-1)+1}$ over $\F_{2^n}$ is locally differentially $(2^{2e}-2^e)$-uniform and its differential spectrum is given by
$$\mathbb{DS}_F=\left\{\omega_0,
\,\omega_2,
\,\omega_{2^{2e}-2^e},\,\omega_{2^m}=1\right\},$$
where $\omega_0=2^n-1-\frac{2^{m-e}\left(2^{2e-1}-2^{e}+1\right)(2^m-1)}
{2^{e}-1}$, $\omega_2=\frac{2^{m+e-1}(2^{m}-1)}
{2^{e}+1}$ and $\omega_{2^{2e}-2^e}=\frac {2^{m-e}(2^{m}-1)}
{2^{2e}-1}$.
%\[
%\begin{aligned}
%&\omega_0=\frac{(2^{m+e-1}-2^e-2^{m-e}-1)(2^{m}-1)}
%{2^{e}-1},\,
%\omega_2=\frac{2^{m+e-1}(2^{m}-1)}
%{2^{e}+1},\\
%&\omega_{2^{2e}-2^e}=\frac {2^{m-e}(2^{m}-1)}
%{2^{2e}-1},\quad
%\omega_{2^m}=1.
%\end{aligned}
%\]
\end{thm}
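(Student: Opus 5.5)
The plan is to run the same scheme as in the proof of Theorem \ref{thm.DS2}. First pin down ${\rm DDT}_F(1,b)=\#N_1-\#N_2$ (Lemma \ref{lem.b2}) for $b\ne1$ to the set $\{0,2,2^{2e}-2^e\}$. Then substitute $t_1=2$, $t_2=2^{2e}-2^e$ into Theorem \ref{thm.diff-4}.

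\textbf{Arithmetic.} By Lemma \ref{lem:gcd}, $\gcd(2^k-1,2^m+1)=1$ forces $k/e$ odd. Since $\gcd(2^k+1,2^m+1)=1$, one of $k/e,m/e$ must then be even, so $m/e$ is even. Hence $\gcd(2k,m)=e\gcd(2k/e,m/e)=2e$ and $\gcd(k,m)=e$.

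\textbf{Root counts.} When $b+\overline b^{2^k}\neq0$, Lemma \ref{lem.eq} applied to $T_1$ (with $r=2k$, $r_1=\gcd(2k,m)=2e$) says that $\#N_1\in\{0,1,2,2^{2e}+1\}$. Applied to $T_2$ it gives $\#N_2\in\{0,1,2,2^e+1\}$. Together with $N_2\subseteq N_1$ and the evenness of ${\rm DDT}_F(1,b)$, the possible values are $0$, $2$, $2^{2e}-2^e$ (when $\#N_2=2^e+1$) and $2^{2e}$ (when $\#N_2=1$).

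When $b+\overline b^{2^k}=0$, $T_1$ degenerates to $(\overline b^{2^k+1}+1)y^{2^{2k}}+(b^{2^k+1}+1)y=0$. Here $b^{2^k+1}\neq 1$ need not hold, so I will treat two subcases.
\begin{itemize}
\item If both coefficients vanish, then $N_1=\mu_{2^m+1}$. I will show this forces $b\in\F_{2^{e}}\cap\{b:\ b^{2^k+1}=1\}$, which should give $b=1$ (excluded) or otherwise a direct contradiction.
\item Otherwise $y^{2^{2k}-1}=c$ with $c\in\mu_{2^m+1}$. This has $0$ or $\gcd(2^{2k}-1,2^m+1)$ roots. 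Since $2k/e$ is even and $m/e$ is even, this gcd is $1$ or $2^{2e}+1$, according to the parity of $m/2e$ in Lemma \ref{lem:gcd}.
\end{itemize}
So this case feeds into the same value set as the generic case.

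\textbf{Main obstacle.} The key step is to exclude the value $2^{2e}$: whenever $\#N_1=2^{2e}+1$, we need $\#N_2\neq1$, i.e.\ $T_2$ has $2^e+1$ roots on $\mu_{2^m+1}$.

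My first attempt uses the identity \eqref{eq:key-identity}. For $y\in N_1\setminus N_2$, it gives $T_2(y)^{2^k-1}=(y^{2^k}+b)^{2^k}/(y+\overline b)$. From this I would try to construct a fixed-point-free involution on $N_1\setminus N_2$ whose orbit structure is incompatible with $\#(N_1\setminus N_2)=2^{2e}$ unless $N_2$ contains three points. The intended source for such an involution is the two preimages $x,x+1$ underlying each $y$.

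As an alternative, I would use the parametrization in Lemma \ref{lem.eq}. Write the roots of $T_1$ as $x_A$, $A\in\F_{2^{2e}}$, built from three base roots, and choose the base roots to include the known root of $T_2$. Then show that the subfamily $A\in\F_{2^e}$ consists of roots of $T_2$. This uses that $T_2^{2^k}$ and $T_2$ are both "$\F_{2^e}$-linear-fractional compatible", since $k/e$ is odd, so that $2^k$ acts on $\F_{2^{2e}}$ as the conjugation fixing $\F_{2^e}$.

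If neither route closes the gap, a fallback is to admit a four-valued set $\{0,2,2^{2e}-2^e,2^{2e}\}$ and use an extra moment. I would then have to compute the number of $b$ with $b+\overline b^{2^k}=0$ or with $\#N_2=1$ directly.

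\textbf{Conclusion.} Once ${\rm DDT}_F(1,b)\in\{0,2,2^{2e}-2^e\}$ is established, Theorem \ref{thm.diff-4} with $t_1=2$ and $t_2=2^{2e}-2^e=2^e(2^e-1)$ gives the stated spectrum. The computation uses $t_2-2=(2^e-2)(2^e+1)$, which yields
\[\omega_{t_2}=\frac{2^m(2^m-1)(2^e-2)}{t_2(t_2-2)}=\frac{2^{m-e}(2^m-1)}{2^{2e}-1}.\]
The values of $\omega_2$ and $\omega_0$ follow in the same way. Finally, $\omega_{t_2}>0$ shows that $F$ is locally differentially $(2^{2e}-2^e)$-uniform.
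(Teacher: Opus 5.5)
Your proposal has a genuine gap, exactly at the step you flag as the main obstacle. You need to rule out ${\rm DDT}_F(1,b)=2^{2e}$, i.e.\ the case $\#N_1=2^{2e}+1$ with $N_2=\{y_0\}$. This is the heart of the theorem, and neither of your routes closes it.

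\begin{itemize}
\item \textbf{First route.} A fixed-point-free involution on $N_1\setminus N_2$, whether induced by $x\mapsto x+1$ or by anything else, only shows that $\#N_1-\#N_2$ is even. Since $2^{2e}$ is even, no orbit count of this kind can exclude it.
\item \textbf{Second route.} This is circular as stated. In Lemma~\ref{lem.eq} the base roots themselves correspond to $A=\infty,0,1$: $x_A$ returns $x_0$ as $A\to\infty$, $x_1$ at $A=0$ and $x_2$ at $A=1$. So the claim that ``all $A\in\F_{2^e}$ give roots of $T_2$'' already requires $x_1,x_2\in N_2$. That presupposes the extra roots of $T_2$ you are trying to produce, when only $y_0$ is known. ``$\F_{2^e}$-linear-fractional compatible'' is not a mechanism that forces $T_2$ to vanish anywhere.
\item \textbf{Fallback.} This does not help either. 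Theorem~\ref{thm.diff-4} handles only three values besides $\omega_N$, and counting the $b$ with $\#N_2=1$ is the original problem.
\end{itemize}

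What is missing is the map $\sigma(y)=(\overline b y^{2^k}+1)/(y^{2^k}+b)$. The paper proceeds as follows.
\begin{itemize}
\item It first shows $b\overline b\neq 1$; otherwise $T_2(y)=(y+\overline b)(y^{2^k}+b)$ would have two roots on $\mu_{2^m+1}$.
\item Then $\sigma$ maps $\mu_{2^m+1}$ to itself and satisfies $T_1(\sigma(y))=(1+b\overline b)T_1(y)^{2^k}/(y^{2^k}+b)^{2^{2k}+1}$. Hence $\sigma$ maps $N_1$ into itself, and its fixed points are exactly the roots of $T_2$.
\item It takes the base triple $y_0$, $y_1$, $y_2=\sigma(y_1)$. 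These are pairwise distinct because $N_2=\{y_0\}$ and $y\mapsto y^{2^k}$ is injective.
\item An explicit computation then shows that $y_{\xi_0}$, with $\xi_0^{2^e}+\xi_0+1=0$, is a root of $T_2$. The computation reduces $y_1^{2^{2k}}$ via $T_1(y_1)=0$ and uses $y_0^{2^k}=(by_0+1)/(y_0+\overline b)$ and $\xi_0^{2^k}=\xi_0+1$. This contradicts $N_2=\{y_0\}$.
\end{itemize}
In this normalization the extra roots of $T_2$ lie on the nontrivial coset $\{\xi:\xi^{2^e}+\xi=1\}$, not on $\F_{2^e}$.

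Your remark that $2^k$ acts on $\F_{2^{2e}}$ as conjugation is the right ingredient. A rigorous version of your second route would go like this. Once $b\overline b\neq1$, one checks that $T_1(y)=0\iff\sigma(\sigma(y))=y$ on $\mu_{2^m+1}$. So in the coordinate $A$, $\sigma$ becomes a semilinear involution $A\mapsto M(A^{2^e})$ of $\mathbb{P}^1(\F_{2^{2e}})$. By Hilbert~90 such a map has exactly $2^e+1$ fixed points. All of this requires identifying $\sigma$, which your proposal does not do.

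There is also a smaller slip in your Case~1. In fact $\gcd(2^{2k}-1,2^m+1)=1$ always, because $\gcd(2k,m)=2e$ and $2k/(2e)=k/e$ is odd; the parity of $m/(2e)$ is irrelevant. Moreover, $b+\overline b^{2^k}=0$ forces $b\in\F_{2^e}$, since $\gcd(m+k,2m)=e$. This gives $N_1=N_2=\{1\}$ and ${\rm DDT}_F(1,b)=0$ directly. Your version leaves a spurious possibility $\#N_1=2^{2e}+1$ in that case, which would also need the missing argument.
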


\begin{proof}
By Lemma \ref{lem.b2}, for $b\in\F_{2^n}\backslash\{1\}$, we have
${\rm DDT}_F(1,b)=\#N_1-\#N_2$, where
$N_i=\{y\in\mu_{2^m+1}:T_i(y)=0\}$ for $i=1,2$.
Since $\gcd\left(2^k+1,2^m+1\right)=\gcd\left(2^k-1,2^m+1\right)=1$,
Lemma \ref{lem:gcd} implies that $k/e$ is odd and $m/e$ is even.
Moreover, $\gcd\left(2^{2k}-1,2^m+1\right)=1$.
We distinguish two cases according to whether
$b+\overline b^{2^k}=0$.

{\bf Case 1}: $b+\bm^{2^k}= 0$. In this case, $b\in\mathbb{F}_{2^e}$ due to $\gcd(m+k,n)=e$. Hence $\overline b=b$ and $b^{2^k}=b$. It follows that
\[
T_1(y)=(b^2+1)y^{2^{2k}}+(b^2+1)y.
\]
Since $b\neq1$, the equation $T_1(y)=0$ is equivalent to
$y^{2^{2k}-1}=1$. As $\gcd\left(2^{2k}-1,2^m+1\right)=1$, the above equation admits the unique solution
$y=1$ in $\mu_{2^m+1}$, and hence $\#N_1=1$.
Furthermore,
\[
T_2(1)=1+\overline b+b+1=0,
\]
showing that $1\in N_2$. Recalling from the proof of Theorem \ref{thm.DS2} that $N_2\subseteq N_1$, we obtain $\#N_2=1$.
Therefore ${\rm DDT}_F(1,\,b)=\# N_1-\# N_2=0$.

{\bf Case 2}: $b+\bm^{2^k}\ne0$. By Lemma \ref{lem.eq}, together with the fact that $m/e$ is even, the equation $T_1(y)=0$ has
$0$, $1$, $2$, or $2^{\gcd(2k,m)}+1=2^{2e}+1$ roots in
$\mu_{2^m+1}$, whereas
$T_2(y)=0$ has $0$, $1$, $2$, or $2^{\gcd(k,m)}+1=2^e+1$ roots.
Accordingly,
$$\#N_1-\#N_2\in\left\{0,\,1,\,2,\,2^{2e}-2^e,\,
2^{2e}-1,\,2^{2e},\,2^{2e}+1\right\}.$$
Since ${\rm DDT}_F(1,\,b)$ is even, it follows that
$${\rm DDT}_F(1,\,b)\in\left\{0,\,2,\,2^{2e}-2^e,\,2^{2e}\right\}.$$
We next show that ${\rm DDT}_F(1,\,b)\ne2^{2e}$. Suppose to the contrary that ${\rm DDT}_F(1,\,b)=2^{2e}$. Then $T_1(y)=0$ has exactly $2^{2e}+1$ roots in
$\mu_{2^m+1}$, while
$T_2(y)=0$ has a unique root, denoted by $y_0$.
Since $N_2\subseteq N_1$, we have $T_1(y_0)=0$. In this case, $b\bm\ne1$. Otherwise if $b\bm=1$, then
\[T_2(y)=\left(y+\overline b\right)\big(y^{2^k}+b\big),\]
and hence $T_2(y)=0$ has two distinct solutions
$y=\overline b$ and $y=b^{2^{-k}}$ in $\mu_{2^m+1}$ since
$b+\overline b^{2^k}\neq0$. This contradicts the uniqueness of $y_0$.
Therefore $b\overline b\neq 1$, and consequently $y^{2^k}+b\neq 0$ and $y+\overline b\neq 0$ for all $y\in\mu_{2^m+1}$.
Moreover, $T_2(y_0)=0$ indicates that
\begin{equation}\label{eq1-y0}
y_0=\frac{\overline b y_0^{2^k}+1}{y_0^{2^k}+b},\,\,y_0^{2^k}=\frac{ b y_0+1}{y_0+\bm}.
\end{equation}
Let $y_1\neq y_0$ be another root of $T_1(y)=0$, and define
$y_2=\frac{\overline b y_1^{2^k}+1}{y_1^{2^k}+b}$. Then
\begin{equation}\label{eq1-y2}
y_1^{2^k}=\frac{by_2+1}{y_2+\overline b},
\end{equation}
while a direct substitution into $T_1(y)$ yields
\[T_1(y_2)=\frac{\left(1+b\bm\right)T_1(y_1)^{2^k}}
{\big(y_1^{2^k}+b\big)^{2^{2k}+1}}=0.\]
Hence $y_2$ is also a root of $T_1(y)=0$. We now verify that $y_0$, $y_1$, and $y_2$ are pairwise distinct. Suppose that $y_1=y_2$. Then
$y_1=\big(\overline b y_1^{2^k}+1\big)/\big(y_1^{2^k}+b\big)$, which is equivalent to
$y_1^{2^k+1}+\overline b y_1^{2^k}+by_1+1=0$.
Thus, $T_2(y_1)=0$, contradicting the assumption that
$y_0$ is the unique root of $T_2(y)=0$. Next, suppose that $y_2=y_0$. Combining \eqref{eq1-y0} and \eqref{eq1-y2}, we obtain
\[
y_1^{2^k}
=\frac{by_2+1}{y_2+\overline b}
=\frac{by_0+1}{y_0+\overline b}
=y_0^{2^k},
\]
which forces $y_1=y_0$, a contradiction. Consequently, $y_0$, $y_1$, and $y_2$ are three distinct solutions of $T_1(y)=0$ in $\mu_{2^m+1}$. Invoking the root parametrization established in Lemma \ref{lem.eq}, every root of \(T_1(y)=0\) in
\(\mu_{2^m+1}\setminus\{y_0,y_1,y_2\}\) can be expressed as
\begin{equation}\label{y3}
y_{\xi}=
\frac{y_1y_2+\xi y_0y_2+(1+\xi)y_0y_1}
{y_0+\xi y_1+(1+\xi)y_2}
\end{equation}
for $\xi\in\mathbb{F}_{2^{2e}}\setminus\mathbb{F}_2$. By Lemma \ref{lem.eq-tri}, there exists
\(\xi_0\in\mathbb{F}_{2^{2e}}\setminus\mathbb{F}_2\) satisfying
$\xi_0^{2^e}+\xi_0+1=0$. Substituting $y_2=\big(\bm y_1^{2^k}+ 1\big)/\big(y_1^{2^k}+b\big)$ into \eqref{y3} yields
\begin{equation}\label{y3-y0y1}
y_{\xi_0}=\frac{\beta_1 y_1^{2^k+1}+\beta_2 y_1^{2^k}+\beta_3 y_1+\beta_4}{\alpha_1 y_1^{2^k+1}+\alpha_2 y_1^{2^k}+\alpha_3 y_1+\alpha_4},
\end{equation}
where
\begin{eqnarray}
% \nonumber to remove numbering (before each equation)
 && \alpha_1=\xi_0,\,\,\alpha_2=y_0+\bm (\xi_0+1),\,\,\alpha_3=b\xi_0,\,\,\alpha_4=b y_0+\xi_0+1, \label{alpha}\\
&& \beta_1=(\xi_0+1) y_0+\bm ,\,\,\beta_2=\bm \xi_0 y_0,\,\,\beta_3=b(\xi_0+1) y_0+1,\,\,\beta_4=\xi_0 y_0. \label{beta}
\end{eqnarray}
We proceed to show that
\(T_2(y_{\xi_0})=0\), thereby contradicting the assumption that
\(T_2(y)=0\) admits the unique solution \(y_0\) in
\(\mu_{2^m+1}\). For convenience, let $\boldsymbol{\alpha}
=(\alpha_1,\alpha_2,\alpha_3,\alpha_4)$, $
\boldsymbol{\beta}=(\beta_1,\beta_2,\beta_3,\beta_4)$ and
$\phi(\boldsymbol{x})=x_1 y_1^{2^k+1}+x_2 y_1^{2^k}+x_3 y_1+x_4$ with $\boldsymbol{x}=(x_1,x_2,x_3,x_4)\in\F_{2^n}^4$. Then \eqref{y3-y0y1} can be rewritten as $y_{\xi_0}=\frac{\phi(\boldsymbol{\beta})}{\phi(\boldsymbol{\alpha})}$. Substituting it into $T_2(y)$ defined in \eqref{eq.T2} gives
\begin{equation}\label{eq.T2-yxi}
T_2(y_{\xi_0})=\frac{\phi\left(\overline b\boldsymbol{\alpha}+\boldsymbol{\beta}\right)\phi(\boldsymbol{\beta})^{2^k}+\phi(\boldsymbol{\alpha})^{2^k}\phi\left(\boldsymbol{\alpha}+b\boldsymbol{\beta}\right)}{\phi(\boldsymbol{\alpha})^{2^k+1}}. 
\end{equation}
Moreover, we have
$\phi(\boldsymbol{x})^{2^k}=x_1^{2^k} y_1^{2^{2k}+2^k}+x_2^{2^k} y_1^{2^{2k}}+x_3^{2^k} y_1^{2^k}+x_4^{2^k}.$
Since \(y_1\) satisfies \(T_1(y)=0\), it follows that $$\left(\big(b+\bm^{2^k}\big)y_1+\bm^{2^k+1}+1\right)y_1^{2^{2k}}
=\left(b^{2^k+1}+1\right)y_1+b^{2^k}+\bm.$$
We now claim that
$\big(\bm^{2^k}+b\big)y_1+\bm^{2^k+1}+1\ne0$. Otherwise, $y_1=\big(\bm^{2^k+1}+1\big)/\big(\bm^{2^k}+b\big)$ since $\bm^{2^k}+b\ne0$. Using the condition $y_1\in\mu_{2^m+1}$, we obtain
$$\left(\bm^{2^k+1}+1\right)^{2^m+1}+\left(\bm^{2^k}+b\right)^{2^m+1}=(b\bm)^{2^k+1}+(b\bm)^{2^k}+b\bm+1=\left(b\bm+1\right)^{2^k+1}=0,$$
i.e., $b\bm=1$, a contradiction. Thus $\big(\bm^{2^k}+b\big)y_1+\bm^{2^k+1}+1\ne0$ and then by $T_1(y_1)=0$ we deduce
$$y_1^{2^{2k}}=\frac{\big(b^{2^k+1}+1\big)y_1+b^{2^k}+\bm}{\big(b+\bm^{2^k}\big)y_1+\bm^{2^k+1}+1}.$$
Plugging this expression back into $\phi(\boldsymbol{x})^{2^k}$ leads to
\begin{equation}\label{eq.phi2k}
\phi(\boldsymbol{x})^{2^k}=\frac{\varphi_1(x_1,\,x_3)y_1^{2^k+1}+\varphi_2(x_1,\,x_3)y_1^{2^k}+\varphi_1(x_2,\,x_4)y_1+\varphi_2(x_2,\,x_4)}{\big(b+\bm^{2^k}\big)y_1+\bm^{2^k+1}+1},
\end{equation}
where $\varphi_1(z_1,\,z_2)=\big(b^{2^k+1}+1\big)z_1^{2^k}+\big(b+\bm^{2^k}\big)z_2^{2^k}$ and $\varphi_2(z_1,\,z_2)=\big(b^{2^k}+\bm\big)z_1^{2^k}+\big(\bm^{2^k+1}+1\big)z_2^{2^k}$. 
Using the values given in \eqref{alpha} and \eqref{beta}, together with $y_0^{2^k}=\left(by_0+1\right)/\left(y_0+\bm\right)$ and $\xi_0^{2^k}=\xi_0+1$, a direct calculation gives
\[
\begin{aligned}
\varphi_1(\alpha_1,\alpha_3)
&=\left(b^{2^k+1}+1\right)\xi_0^{2^k}
 +\left(b+\bm^{2^k}\right)(b\xi_0)^{2^k} \\
&=\left(\xi_0+1\right)\left(1+b\overline b\right)^{2^k}
=\lambda\left(\overline b\alpha_1+\beta_1\right),
\\[1ex]
\varphi_1(\beta_1,\beta_3)
&=\left(b^{2^k+1}+1\right)\left(\left(\xi_0+1\right)y_0+\bm\right)^{2^k}+\left(b+\bm^{2^k}\right)\left(b(\xi_0+1)y_0+1\right)^{2^k}\\
&=((b\bm)^{2^k}+1)
\left(\left(\xi_0+1\right)^{2^k}y_0^{2^k}+b\right)\\
&=\lambda\left(b(\xi_0+1)y_0+b\bm+\xi_0\right)
=\lambda\left(\alpha_1+b\beta_1\right),
\end{aligned}
\]
where $\lambda=\frac{\left(b\bm+1\right)^{2^k}}{y_0+\bm}$. Here, the relation $\xi_0^{2^k}=\xi_0+1$ follows from $\xi_0^{2^e}=\xi_0+1$ and the fact that $k/e$ is odd.
Applying the same arguments, we further obtain
\[
\begin{aligned}
\varphi_2(\alpha_1,\alpha_3)
 &=\lambda\left(\overline b\alpha_2+\beta_2\right),&
\varphi_1(\alpha_2,\alpha_4)
 &=\lambda\left(\overline b\alpha_3+\beta_3\right),&
\varphi_2(\alpha_2,\alpha_4)
 &=\lambda\left(\overline b\alpha_4+\beta_4\right),\\
\varphi_2(\beta_1,\beta_3)
 &=\lambda\left(\alpha_2+b\beta_2\right),&
\varphi_1(\beta_2,\beta_4)
 &=\lambda\left(\alpha_3+b\beta_3\right),&
\varphi_2(\beta_2,\beta_4)
 &=\lambda\left(\alpha_4+b\beta_4\right).
\end{aligned}
\]
Using the above relations along with \eqref{eq.phi2k}, we arrive at
\[\phi(\boldsymbol{\alpha})^{2^k}=\frac{\lambda\phi\left(\overline b\boldsymbol{\alpha}+\boldsymbol{\beta}\right)}{\big(b+\bm^{2^k}\big)y_1+\bm^{2^k+1}+1},\quad \phi(\boldsymbol{\beta})^{2^k}=\frac{\lambda\phi\left(\boldsymbol{\alpha}+b\boldsymbol{\beta}\right)}{\big(b+\bm^{2^k}\big)y_1+\bm^{2^k+1}+1},\]
which immediately implies $T_2(y_{\xi_0})=0$ by virtue of \eqref{eq.T2-yxi}.

Combining these two cases, we conclude that for $b\in\mathbb{F}_{2^{2m}}\setminus\{1\}$,
$${\rm DDT}_F(1,\,b)\in\{0,\,2,\,2^{2e}-2^e\}.$$
Substituting $t_1=2$ and $t_2=2^{2e}-2^e$ into Theorem \ref{thm.diff-4}, we obtain that
$$\mathbb{DS}_F=\{\omega_0,\,\omega_{2},\,\omega_{2^{2e}-2^e},\,\omega_{2^m}=1\}$$
and
\[\omega_0=2^n-1-\frac{2^{m-e}\left(2^{2e-1}-2^{e}+1\right)(2^m-1)}
{2^{e}-1},\,\omega_2=\frac{2^{m+e-1}(2^{m}-1)}
{2^{e}+1},\,\omega_{2^{2e}-2^e}=\frac {2^{m-e}(2^{m}-1)}
{2^{2e}-1}.\]
Since $\omega_{2^{2e}-2^e}=\frac {2^{m-e}(2^{m}-1)}
{2^{2e}-1}>1$, it follows that $F(x)$ is locally differentially $(2^{2e}-2^e)$-uniform. This completes the proof.

% $\phi(x_i)^{2^k}=x_1^{2^k} y_1^{2^{2k}+2^k}+x_2^{2^k} y_1^{2^{2k}}+x_3^{2^k} y_1^{2^k}+x_4^{2^k}
% =\frac{(x_1^{2^k}(b^{2^k+1}+1)+x_3^{2^k}(b+\bm^{2^k}))y_1^{2^k+1}+(x_1^{2^k}(b^{2^k}+\bm)+x_3^{2^k}(\bm^{2^k+1}+1))y_1^{2^k}+(x_2^{2^k} (b^{2^k+1}+1)+x_4^{2^k}(b+\bm^{2^k}))y_1+(x_2^{2^k}(b^{2^k}+\bm)+x_4^{2^k}(\bm^{2^k+1}+1))}{(b+\bm^{2^k})y_1+\bm^{2^k+1}+1}$
\end{proof}

\begin{remark}
When $e=1$, Theorem \ref{thm.DS1} recovers the differential spectrum of the locally APN case, which coincides with the known result in
\cite{Xie-locally}.
Moreover, the conditions in Theorem \ref{thm.DS1} imply that $m/e$ is even by Lemma \ref{lem:gcd}. Taking $k=m/2$, we obtain
$e=m/2$ and $d=(2^m-1)(2^{m/2}+1)^{-1}+1$.
In this case, the function $F(x)=x^d$ is affine equivalent to $G(x)=x^{2^{3m/2}+2^m+2^{m/2}-1}$, and Theorem \ref{thm.DS1} also
gives the differential spectrum of $G(x)$ previously studied in \cite{Tu2023}. Therefore, Theorem \ref{thm.DS1} provides a unified characterization of these known cases.
\end{remark}

% \begin{remark}
% Note that when $e=1$ in Theorem \ref{thm.DS1}, $f(x)$ is locally differentially $2$-uniform and its differential spectrum  is $\mathbb{DS}_F=\{\omega_0=2^{n-1}+2^{m-1}-1,\,\omega_2=2^{m-1}(2^{m}-1),\,\omega_{2^m}=1\}$, which coins with the result given by Xie et al. \cite{Xie-locally}.  
% The condition $\gcd(2^k+1,2^m+1)=\gcd(2^k-1,2^m+1)=1$ in Theorem \ref{thm.DS1} implies that $m/e$ is even by Lemma \ref{lem:gcd}. Set $k=m/2$, then $e=m/2$ and $\gcd(2^k+1,2^m+1)=\gcd(2^k-1,2^m+1)=1$. then $f(x)$ is locally differentially $2^{m}-2^{m/2}$-uniform and $\mathbb{DS}_F=\{\omega_0= (2^{3m/2-1}-1)(2^{m/2}+1),
% \,\omega_2=2^{3m/2-1}(2^{m/2}-1),
% \,\omega_{2^{m}-2^{m/2}}=2^{m/2},\,\omega_{2^m}=1\}$, which is consistent with the result given by Tu et al. \cite{Tu2023}. That is to say our result contain both Xie et al. and Tu et al.'s results.
% \end{remark}

\begin{example}\label{ex2}
Let $n=2m$, $1\leq m\leq 10$, and let $k$ satisfy $\gcd(2^k+1, 2^m+1)=\gcd(2^k-1, 2^m+1)=1$ and $\gcd(k, m)=2$. Then $F(x)=x^{(2^k+1)^{-1}(2^m-1)+1}$ is locally differentially $12$-uniform and its differential spectrum is given by Table \ref{tab-Diff2}, which is consistent with Theorem \ref{thm.DS1}.
\begin{table}[ht]
\caption{The differential spectrum of $F(x)$ in Example \ref{ex2}}\label{tab-Diff2}
\label{tab-Diff2}
\centering
\begin{tabular}{lll}
\hline
$n$ & $k$ & $\mathbb{DS}_F$\\
\hline
8 &
$2,6$ &
$\{\omega_0=155,\ \omega_2=96,\ \omega_{12}=4,\ \omega_{16}=1\}$\\
16 &
$2,6,10,14$ &
$\{\omega_0=38335,\ \omega_2=26112,\ \omega_{12}=1088,\ \omega_{256}=1\}$\\
\hline
\end{tabular}
\end{table}
\end{example}

\section{Conclusion}\label{conc}
In this paper, we studied the differential properties of Niho type power functions of the form 
$F(x)=x^{s(2^m-1)+1}$ over $\mathbb{F}_{2^{2m}}$.
First, based on the Walsh transform, we provided a general characterization of the differential spectrum of $F(x)$ when $\operatorname{DDT}_F(1,b)$, for $b\neq 1$, takes at most three distinct values. The case $s=(2^k+1)^{-1}\pmod{2^m+1}$ with $\gcd(k,m)=e$ was further considered. By analyzing the relationship between the roots of the quadratic polynomials $T_1(y)$ and $T_2(y)$ over the unit circle $\mu_{2^m+1}$, we derived the possible values of ${\rm DDT}_F(1,b)$ and completely determined the differential spectra of $F(x)$. The resulting family exhibits locally differential uniformity of either $2^e$ or $(2^{2e}-2^e)$, depending on the value of $\gcd(2^k-1,2^m+1)$.
In particular, the former case gives rise to new locally differentially $4$-uniform Niho type power functions, whereas the latter includes several known results as special cases.

%\section*{Acknowledgments}

\end{document}